\theoremstyle{plain}
\newtheorem{theorem}{Theorem}[section]
\newtheorem{lemma}[theorem]{Lemma}
\theoremstyle{definition}
\theoremstyle{remark}
\newcommand{\pos}{x}
\newcommand{\latpos}{\tilde{x}}
\newcommand{\velo}{v}
\newcommand{\mo}{p}
\newcommand{\Tt}{T}
\newcommand{\thinnerspace}{\mskip.5\thinmuskip}
\newcommand{\hmc}{Hamiltonian Monte Carlo}
\newcommand{\pdmps}{piecewise-deterministic Markov processes}
\newcommand{\pdmp}{piecewise-deterministic Markov process}
\newcommand{\hzz}{Hamiltonian zig-zag sampler}
\newcommand{\zz}{zig-zag}
\newcommand{\pslab}{p_{\mathrm{slab}}}
\newcommand{\slab}{\pi_{\mathrm{slab}}}
\newcommand{\diff}{\operatorname{\mathrm{d}}\!{}}
\newcommand{\indicator}{\mathds{1}}
\newcommand{\La}{\text{\scriptsize L}}
\newcommand{\Or}{\text{\scriptsize O}}
\definecolor{emerald}{RGB}{12, 166, 151}
\definecolor{lava}{rgb}{0.81, 0.06, 0.13}
\newcommand{\var}{\operatorname{Var}}
\newcommand{\E}{\mathbb{E}}
\definecolor{jhublue}{HTML}{002D72}
\definecolor{jhured}{HTML}{CF4520}
\begin{document}
\begin{frontmatter}
\title{Smoothing Out Sticking Points: Sampling from Discrete-Continuous Mixtures with Dynamical Monte Carlo by Mapping Discrete Mass into a Latent Universe}
\runtitle{Smoothing Out Sticking Points}

\begin{aug}
\author[A]{\fnms{Andrew}~\snm{Chin}\ead[label=e1]{achin23@jhu.edu}},
\and
\author[A]{\fnms{Akihiko}~\snm{Nishimura}\ead[label=e2]{aki.nishimura@jhu.edu}}
\address[A]{Department of Biostatistics, Bloomberg School of Public Health, Johns Hopkins University\printead[presep={,\ }]{e1,e2}}

\runauthor{A. Chin et al.}
\end{aug}

\begin{abstract}
Combining a continuous ``slab" density with discrete ``spike" mass at zero, spike-and-slab priors provide important tools for inducing sparsity and carrying out variable selection in Bayesian models.
However, the presence of discrete mass makes posterior inference challenging.
``Sticky" extensions to \pdmp{} samplers have shown promising performance, where sampling from the spike is achieved by the process sticking there for an exponentially distributed duration.
As it turns out, the sampler remains valid when the exponential sticking time is replaced with its expectation.
We justify this by mapping the spike to a continuous density over a latent universe, allowing the sampler to be reinterpreted as traversing this universe while being stuck in the original space.
This perspective opens up an array of possibilities to carry out posterior computation under spike-and-slab type priors. 
Notably, it enables us to construct sticky samplers using other dynamics-based paradigms such as \hmc{};
in fact, original sticky process can be established as a partial position-momentum refreshment limit of our Hamiltonian sticky sampler.
Our theoretical and empirical findings suggest these alternatives to be at least as efficient as the original sticky approach.
\end{abstract}


\begin{keyword}
\kwd{spike and slab}
\kwd{Bayesian variable selection}
\kwd{\pdmp{}}
\kwd{\hmc{}}
\end{keyword}

\end{frontmatter}

\section{Introduction}
In modern high dimensional problems, it is often desirable, or even necessary, to impose sparsity in models' parameters. 
In Bayesian paradigms, this is achieved through shrinkage priors with large probability near 0.
These priors fall into two main classes.
The first comprises spike-and-slab priors \citep{george1993variable, mitchell1988bayesian}, which mix a delta mass ``spike" at 0 and a continuous density ``slab" $\slab(\pos_i)$:
\begin{equation}\label{eq:ss_prior}
	\pi_0(\pos_i) = (1-\pslab)\delta_0(\pos_i) + \pslab \slab(\pos_i), \quad \pslab \in (0,1).
\end{equation}
The second comprises continuous shrinkage priors that concentrate probability mass near 0, but consist only of continuous densities.
While computationally convenient, the lack of delta mass means these priors cannot yield exact zeros in the posterior estimates, and so post-processing of the posterior is required to do variable selection \citep{hahn2015decoupling}. 
Overall, spike-and-slab priors remain an often preferred method for sparse estimation and yield optimal results in many cases \citep{carvalho2009handling, tadesse2021handbook}. 

Traditionally, spike-and-slab posteriors are sampled with Gibbs or reversible jump samplers, but both can mix poorly and the latter can be difficult to tune \citep{green2009reversible,o2009review}.
Recent intensive research on \pdmp{} samplers \citep{fearnhead2018piecewise, bierkens2019zig, bouchard2018bouncy} has given rise to novel ``sticky" samplers for these posteriors  \citep{bierkens2023sticky}. 
These samplers follow the standard \pdmp{} dynamics away from 0, but their coordinates ``stick" at 0 for exponentially distributed amounts of time.

\begin{figure}
	\centering
	\vspace{2mm}
	\includegraphics[width=8cm]{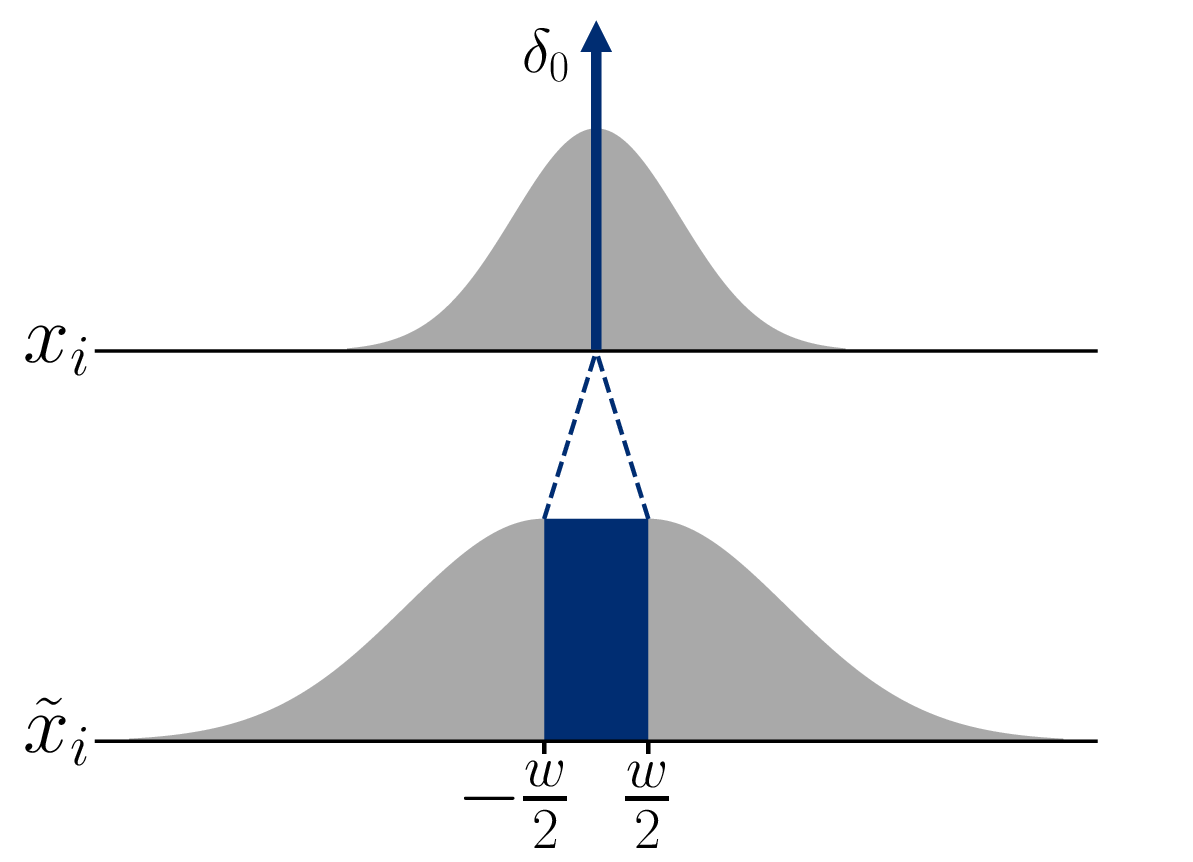}
	\caption{Constructing a continuous density representation of the spike-and-slab prior by spreading the spike mass over a latent universe and inserting it in the middle of the slab density.
		In other words, we introduce a latent parameter $\latpos_i$ with the continuous density which, when collapsing the latent universe to 0, 
		recovers the spike-and-slab prior on the original parameter $\pos_i$.
		A corresponding posterior on $\latpos_i$ also has a continuous density as long as the likelihood is a continuous function of $\pos_i$.
		Sampling from the discrete-mixture posterior thus reduces to first sampling from the latent continuous density and then mapping the samples back to the original space.%
	}
	\label{fig:augment}
\end{figure}

In this article, we observe that such sticking behavior can be viewed as the sampler entering a ``latent universe" when it reaches 0.
This universe is created by inserting the spike mass, as a continuous density over a latent space, in the middle of the slab density (Figure 1).
We can then sample from the resulting continuous density, map the latent universe back to the spike, and obtain draws from the original distribution.
The mapped trajectory of a continuous-time sampler now appears to stick at 0 while traveling through the latent universe.

Having reduced the task to sampling from a continuous density, we can straightforwardly apply existing algorithms such as \hmc{} and construct \textit{Hamiltonian sticky samplers}.
We can also run standard \pdmp{} samplers on the latent continuous density, yielding \textit{latent sticky samplers} with deterministic sticking times.
While embedding a discrete mass into continuous space is not a new concept \citep{petris2003geometric, pakman2013auxiliary, nishimura2020discontinuous}, our work uses this idea to provide new insights into the state-of-the-art sticky sampler and to contrast it with novel variants.
We prove that the original sticky sampler is in fact a limit of our Hamiltonian sticky sampler, extending the prior work of \citet{chin2024mcmc}.
Specifically, under limits of increasingly frequent partial refreshments of position and momentum, the Hamiltonian sticky sampler converges strongly to the latent sticky sampler, which in turn converges strongly to the original sticky sampler.

We additionally establish that the latent sampler achieves provably lower asymptotic variance than the original sticky sampler when sampling from product distributions. 
We provide empirical evidence of this efficiency gain holding more generally via a numerical study on high-dimensional linear regression posteriors, where the three sticky samplers can be simulated exactly.
Beyond the linear regression case, our latent and Hamiltonian sticky samplers are more broadly applicable as they can be straightforwardly combined with off-the-shelf, general-purpose numerical integrators.
We demonstrate this generality by applying the numerically-integrated sticky samplers to high-dimensional logistic regression posteriors.


\section{Sticking from a latent universe perspective}

\subsection{Zig-zag samplers as illustrative cases}

The latent universe scheme can be combined with any dynamical Monte Carlo method to yield valid samples.
For illustration, we focus our paper on a specific case of Monte Carlo samplers that depend on \pdmps{} and Hamiltonian dynamics: the \zz{} process sampler \citep{bierkens2019zig} and its Hamiltonian counterpart \citep{nishimura2024zigzag}.

Consider a continuous target $\pi(\pos) \propto e^{-U(\pos)}$ for $\pos \in \mathbb{R}^d$ with differentiable $U$.
The \zz{} sampler augments the target with a velocity variable $\velo$ uniformly distributed on $ \{\pm1\}^d$ and, from an initial condition $(\pos_0, \velo_0)$, follows deterministic dynamics 
\begin{equation}\label{eq:zz_dynamics}
	\pos_t = \pos_0 + t\velo_0, \quad \velo_t = \velo_0
\end{equation}
until the next bounce event.
The event coincides with the earliest of Poisson events that occur according to rates
\begin{equation}\label{eq:zz_rate}
	\lambda_i(\pos_t, \velo_t) = \max\{0, v_{i, t}\partial_i U(\pos_t) \}, \  i=1,\dots, d,
\end{equation} 
where $\partial_i$ denotes the $i$th partial derivative, and results in a sign flip of the corresponding velocity component. 
This \zz{} dynamics has a unique stationary distribution with $\pi(\pos)$ as its marginal and thus can be used to generate samples from the target.

The \hzz{} is a variant of Hamiltonian Monte Carlo that generates Metropolis proposals using Laplace distributed momentum $\mo \in \mathbb{R}^d$ with density $\pi(\mo) \propto \exp(-\sum_i |\mo_i|)$.
The corresponding Hamiltonian dynamics has a velocity $\velo := \diff{\pos}/\!\diff{t} = \mathrm{sign}(\mo) \in \{\pm1\}^d$ and,
like the Markovian dynamics above, its position and velocity $(\pos, \velo = \mathrm{sign}(\mo))$ follow Equation~\eqref{eq:zz_dynamics} in between velocity flip events.
However, the events now occur deterministically when the momentum coordinates, which evolve as
\[
\mo_{i,t} = \mo_{i,0} - \int_0^t v_{i, s}\partial_i U(x_s)\diff{s},
\]
change signs.
The dynamics can be simulated exactly on a piecewise Gaussian target and generates rejection-free proposals \citep{nishimura2024zigzag}.
This deterministic proposal generation is combined with refreshment of the momentum from $\pi(\mo)$ at each iteration, ensuring ergodicity.

In a spike-and-slab model with a likelihood $L(y \mid x)$ and prior $\prod_i \pi_0(x_i)$, our target posterior $\pi(x)$ is the discrete-continuous mixture proportional to $L(y\mid x) \prod_i \pi_0(x_i)$. 
To draw from this class of targets, \citet{bierkens2019zig} introduce the sticky sampler, which combines a standard \pdmp{} with a sticking mechanism at the spike.
Upon reaching 0, the coordinate of the sampler sticks for an amount of time exponentially distributed with mean
\begin{equation}\label{eq:unstick_rate}
	w = \frac{1-\pslab}{\pslab \thinnerspace \slab(0)}
\end{equation} 
while the unstuck coordinates continue to evolve. 
Upon unsticking, the coordinate proceeds with the same velocity as when it reached 0.

\subsection{From point mass to latent universe}

We now present our latent universe scheme that achieves an analogous sticky behavior and, further, allow generalization to other samplers.
As illustrated in Figure~\ref{fig:augment}, the idea is to replace the discrete mass with a continuous density over a latent universe and insert it in between the continuous parts of the target.
We choose the universe's width to be $w$, the mean sticking duration \eqref{eq:unstick_rate}, and the density height to be $\slab(0)$.
This choice of width yields the \textit{latent density} that maintains continuity in its height at the interface between the latent universe and the rest of the space;
other choices of width are possible, but our choice likely optimizes the corresponding latent sampler's efficiency (Appendix~\ref{supp:changing_width}).
Collapsing the latent universe to 0 via a map 
\begin{equation}
	\label{eq:map_from_latent_to_orig}
	\latpos_i \to 0 \ \text{ if } \, | \latpos_i | \leq w/2, \quad
	\latpos_i \to \latpos_i - \operatorname{sign}(\latpos_i) w/2 \ \text{ if } \, | \latpos_i | > w/2
\end{equation}
transforms the latent density back to the original discrete-mixture target.

In other words, our construction yields a continuous density representation $\pi_0(\latpos_i)$ of the spike-and-slab prior in the latent parameter space. 
The corresponding posterior is given as $\pi(\latpos) \propto L(y \mid \latpos) \prod_i \pi_0(\latpos_i)$, where the likelihood is a constant function of $\latpos_i$ and takes values 
$L(y \mid \latpos) 
= L(y \mid \latpos_{-i}, \pos_i = 0)$ 
on $\latpos_i \in [-w/2, w/2]$. 
Figure~\ref{fig:3dpost} illustrates the latent parameter posterior in a two-dimensional case.

To connect the above construction with the sticky behavior, consider running a standard \pdmp{} sampler on the latent density.
When mapping the latent universe to 0 and viewing the trajectory in this original space, the sampler now appears to stick at 0; we refer to this mapped trajectory as the \textit{latent sticky sampler}.
The difference from the original method of \citet{bierkens2023sticky} is that the sticking time is now given deterministically as the time it takes to traverse the universe, which by our construction coincides with the mean of the original method's random sticking time.

By reducing the problem to sampling from a continuous density, our latent universe perspective opens up an array of possibilities to deal with the spike-and-slab posteriors.
In particular, we can apply the \hzz{} on the latent density, yielding \textit{Hamiltonian} sticky samplers.

\begin{figure}
	\centering
	\vspace{2mm}
	\includegraphics[width=13cm]{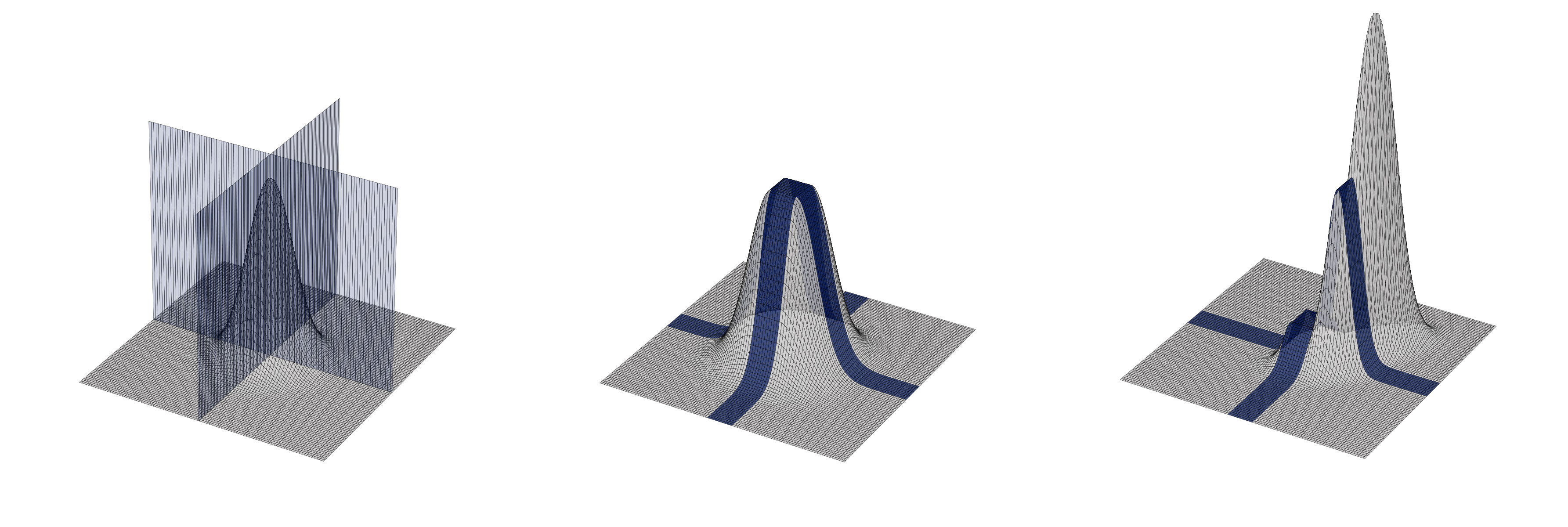}
	\caption{Left: Bivariate product of spike-and-slab priors with standard normal slab, with the spike masses shown in blue. 
		Middle: Latent continuous density representation of the prior. 
		Right: Posterior density in the latent parameter space.%
	}
	\label{fig:3dpost}
\end{figure}


\subsection{Connecting the original, latent, and Hamiltonian sticky samplers}
\label{sec:connecting_samplers}

We now show that the three sticky variants become equivalent under a partial position-momentum refreshment limit.
The latent and Hamiltonian sticky sampler are simply transformations of the corresponding standard samplers on the continuous latent density;
their equivalence in limit is thus implied by the previous results of \cite{chin2024mcmc}, who show that a generalized Hamiltonian dynamics converge strongly to corresponding \pdmps{} under increasingly frequent partial momentum refreshment.

To complete the equivalence among the three, therefore, it only remains to establish a connection between the original and latent sticky samplers.
We achieve this by 
considering the following partial refreshment, which occurs as a Poisson event with rate $r$, of the position component in the latent sticky sampler:
when a position coordinate is inside the latent universe, we resample it uniformly from the universe.
In other words, the latent sticky sampler with partial position refreshment is obtained by considering the \zz{} sampler in the latent space with additional event rate
$r \indicator\{ | \latpos_{i, t} | < w/2 \}$
and corresponding transition kernel 
$\latpos_i \sim \mathrm{Unif}(-w/2, w/2)$
and then mapping it back to the original space via collapsing of the latent universe.
This partial refreshment induces randomness in the latent sampler's sticking times which, in the limit $r \to \infty$, become exponentially distributed with the same mean as the original sticky sampler.
In fact, the partially-refreshed latent sticky sampler converges strongly to the original sticky sampler, as we establish now.

In the theorem statement below, $\rho_T$ denotes the Skorokhod metric on $[0,T]$ \citep{billing}.
The assumed differentiability of the posterior's continuous density part holds whenever both likelihood and prior slab density are differentiable. 
The proof is in Appendix~\ref{supp:pf1}.
\begin{theorem}\label{thm:limit1}
	For a spike-and-slab posterior with twice continuously differentiable density part, the latent sticky sampler with the partial position refreshment converges strongly to the original sticky sampler;
	i.e., from the same initial state, we can create a sequence of the position-velocities trajectories  $(\pos^{\La,r}_t, \velo^{\La,r}_t)$ coupled to the original sticky sampler $(\pos^{\Or}_t, \velo^{\Or}_t)$ so that for any $\epsilon>0$,
	\[
	\lim_{r \to \infty}P[\rho_T\{(\pos^{\La,r}_t, \velo^{\La,r}_t), (\pos^\Or_t, \velo^\Or_t)\} > \epsilon] = 0.
	\]
\end{theorem}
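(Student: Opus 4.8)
The plan is to build an explicit coupling on a common probability space and to reduce the whole discrepancy between the two trajectories to the discrepancy between their sticking durations. First I would observe that, away from the spike, the partially-refreshed latent sticky sampler and the original sticky sampler obey \emph{identical} dynamics: while a coordinate $\latpos_i$ sits inside its latent universe it maps to $\pos_i=0$, and because the density is flat there the bounce rate \eqref{eq:zz_rate} for that coordinate vanishes, so $\latpos_i$ crosses the universe ballistically at unit speed while every other coordinate evolves exactly as in the original sampler with $\pos_i$ pinned at $0$. I would therefore drive both processes with a common family of Poisson bounce clocks; the two constructions then evolve identically until the first time some coordinate reaches $0$, and, proceeding sequentially, coincide on all of $[0,T]$ as soon as every sticking duration is matched. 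The only possible source of discrepancy is thus the time a coordinate spends at $0$ on each excursion --- an $\mathrm{Exp}(1/w)$ variable for the original sampler, versus the universe-traversal time $\tau^r$ (enter one wall at unit speed, refresh to $\mathrm{Unif}(-w/2,w/2)$ at rate $r$, exit at the far wall) for the latent sampler.

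The second step is to pin down the limiting law of $\tau^r$. A first-passage computation expresses $\tau^r$ as a $\mathrm{Geom}(q_r)$ number, $q_r := (1-e^{-rw})/(rw)$, of inter-refreshment times --- each, conditioned on not yet having exited, essentially $\mathrm{Exp}(r)$ --- plus a vanishing residual ascent, plus an atom at $w$ of mass $e^{-rw}$. Since a geometric compound of $\mathrm{Exp}(r)$ variables is $\mathrm{Exp}(q_r r)$ and $q_r r=(1-e^{-rw})/w\to 1/w$, we get $\tau^r\Rightarrow\mathrm{Exp}(1/w)$; equivalently, as $r\to\infty$ the within-universe position equilibrates instantly to $\mathrm{Unif}(-w/2,w/2)$ and leaves through the far wall at effective rate (density there)$\,\times\,$(speed)$\,=\,(1/w)\times 1$. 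Crucially, an $\mathrm{Exp}(1/w)$ variable has mean exactly $w$, so the limit reproduces the original sampler's mean sticking time \eqref{eq:unstick_rate} \emph{on the nose}, which is what makes an exact match --- not merely a match in mean --- possible. A little more bookkeeping with this density in fact gives $\|\mathcal{L}(\tau^r)-\mathrm{Exp}(1/w)\|_{\mathrm{TV}}\to 0$.

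Granting the total-variation statement, the conclusion is immediate: for each excursion use the maximal coupling of $\tau^r$ with an $\mathrm{Exp}(1/w)$ variable, so the durations agree off an event of probability $\epsilon_r:=\|\mathcal{L}(\tau^r)-\mathrm{Exp}(1/w)\|_{\mathrm{TV}}\to 0$; since the original sampler almost surely makes finitely many excursions in $[0,T]$, for any $\eta$ one picks $M_0$ with $P[\#\text{excursions}>M_0]<\eta$ and then $r$ with $M_0\epsilon_r<\eta$, so off an event of probability $<2\eta$ every duration is matched, the two paths coincide on $[0,T]$, and $\rho_T=0$; sending $r\to\infty$ then $\eta\to 0$ finishes the proof. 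I expect the two places where real work is hidden to be the main obstacle: (i) the total-variation strengthening of $\tau^r\Rightarrow\mathrm{Exp}(1/w)$ --- weak convergence is routine, but one must kill the vanishing atom at $w$ and show the absolutely continuous part converges to $w^{-1}e^{-t/w}$ in $L^1$; and (ii), should one instead take the softer route of quantile-coupling $\tau^r$ with the original duration $S$ (so $|\tau^r-S|\to 0$ in probability) and transferring this to the path, a stability lemma for the sticky zig-zag flow --- that off a null set of degenerate paths (two events coinciding, a coordinate hitting $0$ at an event time, a bounce clock ringing tangentially to the accumulated rate), the path depends continuously on the sticking durations, so a strictly positive random stability radius dominates $\max_k|\tau^r_k-S_k|\to 0$. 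This latter argument is in the spirit of \cite{chin2024mcmc} and is where the $C^2$ hypothesis enters, through Lipschitz continuity of the bounce rates.
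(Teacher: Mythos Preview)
Your proposal is sound and shares its backbone with the paper --- reduce the trajectory discrepancy to the sticking-duration discrepancy, then show the sticking times converge to $\mathrm{Exp}(1/w)$ --- but diverges in how that convergence is transferred to the path. You argue for total-variation convergence of $\mathcal{L}(\tau^r)$ and then maximal coupling, so that on a high-probability event every sticking duration is matched \emph{exactly} and $\rho_T = 0$ outright; the paper instead establishes only weak convergence of the sticking time (its Lemma~\ref{lemma:transitionprob}), quantile-couples via the inverse-CDF transform so that $|\varsigma_n^r - \varsigma_n| \to 0$ almost surely, and then constructs an explicit piecewise-linear time-dilation $\kappa_r$ that aligns the sticking intervals, bounding the Skorokhod distance through $\sum_n |\varsigma_n^r - \varsigma_n|$ and $\|\kappa_r\|^\circ$. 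Your route is potentially cleaner --- the dilation construction is bypassed, and the multi-dimensional case collapses to the one-dimensional one since exact matching of durations under common bounce clocks gives identical mapped paths --- but the price is the total-variation strengthening you flag as (i), which the paper avoids entirely. Your alternative ``softer route'' (ii), quantile-coupling plus a stability lemma for the flow, \emph{is} essentially the paper's approach in dimension $d>1$: it shows that the potential stick, unstick, and bounce times depend continuously on the preceding state, invoking the $C^2$ hypothesis to get Lipschitz $\partial_i U$ and hence convergence of the bounce-time integrals, just as you anticipate. Finally, your derivation of $\tau^r \Rightarrow \mathrm{Exp}(1/w)$ via the geometric-compound-of-exponentials representation is different from, and arguably more transparent than, the paper's, which conditions on the number $m$ of refreshments in $[0,s]$, uses the $\mathrm{Dirichlet}(1,\ldots,1)$ law of the inter-refreshment gaps, and computes the limit of a combinatorial sum.
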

The above result completes the equivalence among the original, latent, and Hamiltonian sticky samplers in the partial refreshment limits, thereby adding to the emerging body of work providing theoretical and practical connections between the \pdmp{} and \hmc{}.
Further, an additional connection to the reversible jump \pdmp{} of \citet{chevallier2023reversible} is explored in Appendix~\ref{supp:changing_width}.

Theorem~\ref{thm:limit1} quantifies the similarity between the original and latent sticky samplers.
On the other hand, the two samplers turn out to have a measurable difference in practical performance.
Specifically, the latent sampler's removal of the randomness in sticking times leads to improved performance, as quantified in the following theorem.
The required regularity condition, as given by Equation~\eqref{eq:regularity_for_regeneration_thm}
in Appendix~\ref{supp:pf2}, is mild and holds in particular when the likelihood is log-concave, the prior slab is Gaussian, and $f$ is polynomially bounded.
\begin{theorem}
	\label{thm:latent_as_rao_blackwellized_ver}
	For a spike-and-slab posterior target with independent coordinates, 
	the estimator $\tau^{-1} \int_0^\tau f(\pos_{i,t})\diff t$ of the marginal statistics $\mathbb{E}[f(\pos_i)]$ achieves a smaller asymptotic variance under the latent sticky sampler than under the original sampler.
\end{theorem}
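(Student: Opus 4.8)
The plan is to use the independence of coordinates to reduce the claim to a one-dimensional comparison, and then to observe that the latent sampler is obtained from the original one by replacing the random sticking time at $0$ with its mean, so that a Rao--Blackwell-type inequality applies at the level of asymptotic variances. By independence, the $i$th coordinate evolves autonomously under either sampler as a one-dimensional sticky \zz{} process with the spike-and-slab marginal $\pi_0$ as its invariant law: away from $0$ it follows the \zz{} dynamics for the slab $\slab$, and upon hitting $0$ it holds there and then departs with the same velocity. The two samplers coincide except that this holding time equals $w$ deterministically for the latent sampler and is $\mathrm{Exp}(1/w)$ for the original sampler; both have mean $w$, both processes share the invariant law $\pi_0$, and hence both share the target value $\mu := \E[f(\pos_i)]$. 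Note also that on the latent universe the potential is flat, so no velocity flips occur there.

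First I would set up a regenerative structure common to both samplers. Since coordinate $i$ can return to $0$ from the right (resp.\ left) only with velocity $-1$ (resp.\ $+1$), the velocities at successive departures from $0$ alternate deterministically, so the epochs at which coordinate $i$ leaves $0$ with velocity $+1$ are regeneration times. Between two consecutive such epochs a ``super-cycle'' consists of a positive excursion, a holding period $S_1$, a negative excursion, and a holding period $S_2$, and these super-cycles are independent and identically distributed. Writing $L_{\mathrm{exc}}$ for the total excursion time in a super-cycle and $Z_{\mathrm{exc}} := \int f(\pos_{i,t})\,\diff t$ over the excursion windows, the pair $(Z_{\mathrm{exc}}, L_{\mathrm{exc}})$ is generated by identical dynamics in the two samplers and is independent of $(S_1,S_2)$; only the law of $(S_1,S_2)$ differs.

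Next I would apply the regenerative central limit theorem. Under the regularity condition \eqref{eq:regularity_for_regeneration_thm} --- which ensures finite expected excursion length and finite second moments of the super-cycle length $L = L_{\mathrm{exc}} + S_1 + S_2$ and integral $Z = Z_{\mathrm{exc}} + f(0)(S_1 + S_2)$ --- we have $\sqrt{\tau}\,\{\tau^{-1}\!\int_0^\tau f(\pos_{i,t})\,\diff t - \mu\} \Rightarrow N(0,\sigma^2)$ with $\sigma^2 = \E[(Z - \mu L)^2]/\E[L]$. Setting $A := Z_{\mathrm{exc}} - \mu L_{\mathrm{exc}}$ and $B := f(0) - \mu$ gives $Z - \mu L = A + B(S_1 + S_2)$, and since $\E[L]$, $\E[A]$, $\E[A^2]$, $\E[S_1+S_2] = 2w$, and $\E[A(S_1+S_2)] = \E[A]\,\E[S_1+S_2]$ all agree across the two samplers, the only term of $\E[(Z-\mu L)^2]$ that can differ is $B^2\,\E[(S_1+S_2)^2]$, which in turn differs only through $\var(S_1+S_2)$ --- equal to $2w^2$ for two independent exponentials and to $0$ for the deterministic holdings. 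Hence
\[
\sigma^2_{\mathrm{orig}} - \sigma^2_{\mathrm{latent}} \;=\; \frac{2w^2\,(f(0) - \mu)^2}{\E[L]} \;\geq\; 0 ,
\]
which is the claimed inequality, with equality only in the degenerate case $f(0) = \E[f(\pos_i)]$.

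I expect the main obstacle to be discharging the moment hypotheses of the regenerative central limit theorem: one must show that the one-dimensional \zz{} excursions off $0$ have finite expected length and that $L$ and $\int_{\mathrm{cycle}} |f|$ possess finite second moments. This is exactly the role of condition \eqref{eq:regularity_for_regeneration_thm}, and is where log-concavity of the likelihood, Gaussianity of the slab, and polynomial growth of $f$ enter, via tail bounds on \zz{} excursion lengths. A secondary technical point is to confirm rigorously that the mapped latent process has occupation measure exactly $\pi_0$, so that $\mu$ is genuinely shared by the two samplers; this follows from our construction of the latent density but should be stated explicitly.
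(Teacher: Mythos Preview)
Your argument is correct and follows the same high-level strategy as the paper: reduce to one dimension via coordinate independence, invoke the regenerative central limit theorem, split the cycle reward into an excursion (slab) part and a sticking (spike) part, and then compare variances. The difference lies in the choice of regeneration epoch. The paper regenerates at returns to a fixed state $(\pos_{i,0},\velo_{i,0})$ with $\pos_{i,0}\neq 0$ and then bounds the original sampler's asymptotic variance from below by the latent one through the conditional-variance inequality $\var(Z_{\mathrm{slb}}+Z_{\mathrm{spk}})\geq \var\{\E(Z_{\mathrm{slb}}+Z_{\mathrm{spk}}\mid Z_{\mathrm{slb}})\}$. You instead regenerate at departures from $0$ with velocity $+1$, which forces exactly two sticking events per super-cycle; the total sticking time is then deterministically $2w$ for the latent sampler and a sum of two independent $\mathrm{Exp}(1/w)$ variables for the original one. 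This buys you the explicit gap $\sigma^2_{\mathrm{orig}}-\sigma^2_{\mathrm{latent}}=2w^2\{f(0)-\mu\}^2/\E[L]$, which the paper does not obtain, and it also sidesteps a delicacy in the paper's presentation: with a regeneration point away from $0$, the number of visits to $0$ per cycle can in principle be random, so the paper's assertion that the latent sampler's total stuck time $S_1^{\La}$ is deterministic within a cycle is not entirely immediate (though the conditional-variance argument can be repaired once one notes that this count is measurable with respect to the slab data).
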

The main proof idea is that, along each coordinate,  we can decompose the time average into the spike and slab components:
\begin{align*}
	\int_{0}^{\tau}f(x_{i,t})\diff t &= \left| \{ t: x_{i,t} = 0 \} \right|  f(0) + \int_{\{t: x_{i,t} \neq 0\}}f(x_{i,t})\diff t,
\end{align*}
where $| \{ t: x_{i,t} = 0 \} | $ represents the total time spent at 0.
The latent sampler achieves reduction in the variance of the former term, while performing the same as the original sampler on the latter term since the two share the same dynamics away from 0.
The full proof is in Appendix~\ref{supp:pf2}.

While the above result pertains to posteriors with independent coordinates, it is heuristically reasonable to expect such variance reduction when replacing random quantities with their expected values and providing a form of Rao-Blackwellization \citep{robert2021rao_blackwell_mcmc}.
We thus expect our latent universe scheme to provide variance reduction more generally, as we empirically confirm in Section~\ref{sec:simulation}.

\section{Simulation study}\label{sec:simulation}
We now compare the sticky samplers with a simulation study. 
First, we consider a linear regression setting with a Gaussian likelihood, where exact simulations of the dynamics are possible through analytical calculations.
Then, we consider a logistic regression setting, where we apply numerically-integrated dynamics and demonstrate how our latent universe framework extends beyond the Gaussian likelihood case.

\subsection{Data generation and model specification}\label{subsec:data_gen}
Our examples emulate problems in statistical genetics, where predictors exhibit strong block correlations within each chromosome.
Within the $\ell$th block, we simulate the predictors $\{ g_{k \ell} \}$ in a correlated manner via
\begin{align*}
	g_{1 \ell} \sim N(0,1),\quad	g_{k\ell} = \alpha g_{k-1,\ell} + \sqrt{1-\alpha^2}\xi_{k \ell}, \quad \xi_{k \ell} \sim N(0,1),
\end{align*}
while keeping them independent across the blocks. 
We set the value of $\alpha$ as $0.5$, $0.9$, and $0.99$, creating scenarios with varying degrees of correlation.
We simulate $\ell = 1, \ldots, 20$ blocks of $k = 1, \ldots, 100$ predictors, for a total of $p=2{,}000$ predictors. 
We then simulate $n=2{,}000$ observations.

For linear regression, the response is $y = g\pos + \epsilon$ with $\epsilon \sim N(0, \sigma^2 I)$, where 20 of the coefficients $\pos_j$ are randomly chosen to be $\pm 1$ and the rest set to 0.
We set $\sigma = 10$ for the signal-to-noise ratio $\mathrm{var}(g\pos)/\mathrm{var}(y)$ to fall in the range 0.1 to 0.2 across all the $\alpha$ values, reflecting typical strengths of signals in genetics \citep{dun2024robust}.
For logistic regression, we use the same design matrix to retain the same correlation structure among the predictors but simulate binary outcomes via $y \sim \operatorname{Bernoulli}\{\operatorname{logit}^{-1}(x_0 + g \pos)\}$.
We use the same true coefficient values of $\pos$ as in the linear case. 
For the intercept, we set $x_0=-8$ to yield an outcome $y$ with prevalance of around 5\%.

We infer $\pos_j$ based on independent spike-and-slab priors with a standard normal slab.
To assess how the samplers' performances depend on posterior sparsity levels, we experiment with three different values of $\pslab$: $0.001$, $0.01$, and $0.1$, covering under-, exact, and over-estimates of the true sparsity.
For simplicity, we take $\sigma^2$ as known in the linear regression case and focus on the inference on $\pos$.
In the logistic case, we give $x_0$ a flat prior $\pi(x_0) \propto 1$, which can be jointly sampled with $\pos$ by setting its prior slab probability to be 1 and prior slab standard deviation to be infinite.

\subsection{Linear regression sampler setup and evaluation metrics}
\label{sec:sampler_setup_and_metrics}
We compare the original, latent, and Hamiltonian sticky samplers.
The Hamiltonian sticky sampler requires tuning of the travel time; i.e.\ how long to simulate the Hamiltonian dynamics to obtain the next state.
We tune it to achieve good performance on the $\alpha=0.9$, $\pslab=0.01$ case, trying travel times uniformly drawn on $\mathrm{Unif}(0.5 \tau, 1.5 \tau)$ for $\tau=1,2,3,4,5$ and finding $\mathrm{Unif}(2, 6)$ to be optimal.
This same travel time is used for all other cases, though the sampler's performances there can be likely improved with further tuning.
We also slightly modify the momentum refreshment step of the underlying \hzz{}:
instead of the usual full refreshment at each iteration, we only refresh the coordinates $\mo_i$ corresponding to the position coordinates with $| \latpos_i | > w / 2$.
This reduces wasteful backtracking by ensuring that, once entering the latent universe $\latpos_i \in [-w/2, w/2]$, the position coordinate  maintains a persistent motion across iterations and always comes out on the other end.
The original and latent sticky samplers do not require tuning in our simplified setup with fixed $\sigma^2$.

For the original and latent sticky samplers, which are continuous time stochastic processes, we collect posterior samples at  $\tau = 4 m$ for $m \in \mathbb{Z}^+$ so per-iteration travel time is equal to the Hamiltonian sticky sampler's average travel time.
In the experimental settings where memory usage becomes excessive, we thin the samples across all methods equally.

We compare the samplers' performances in terms of effective sample sizes normalized by computation time.
A common way to summarize a sampler performance is to take coordinate-wise posterior means as the statistics of interest, calculate their effective sample sizes, and report the minimum value \citep{hoffman2014no}.
However, we have found these coordinate-wise metrics to be unreliable for the true-zero coefficients because their posterior samples are mostly zeros and show little variation.
For the true-zero coefficients, therefore, we take as alternative statistics the sums of their squared values within each block of the correlated predictors.
We combine these 20 statistics with the coordinate-wise means of the 20 true-nonzero coefficients, calculate their effective sample sizes, and use the minimum of these 40 values as our performance metric.

We calculate each effective sample size measure by averaging the estimates from five independent sampler runs.
Each run is long enough to ensure all the effective sample sizes to be at least 200.
For the latent and Hamiltonian sticky samplers, true-zero coefficients are initialized by a uniform draw within the latent universe, while true-nonzero coefficients were initialized at their true values with small $N(0, 0.001^2)$ perturbations added. 
We run simulations on the  Johns Hopkins Joint High Performance Computing Exchange cluster, allocating a single Intel Xeon Platinum 8558U CPU core and 5 gigabytes of memory for each chain.
Code to reproduce the results is available at 
{\small \url{https://github.com/chinandrew/smoothing\_out\_sticking\_points}}.

\subsection{Linear regression results}
\label{sec:num_results}

\begin{table}[t]
	\caption{Minimum effective sample size per computation time, shown as the ratio relative to the original sticky sampler's performance.}
	\label{tab:results_time}
	\begin{tabular}{c ccc ccc}
		& \multicolumn{3}{c}{Latent sampler} & \multicolumn{3}{c}{Hamiltonian sampler} \\
		& \multicolumn{3}{c}{$\pslab$} & \multicolumn{3}{c}{$\pslab$} \\
		\cmidrule(lr){2-4} \cmidrule(lr){5-7}
		\hspace*{1ex} $\alpha$ \hspace*{1ex} & 0.001 & 0.01 & 0.1 & 0.001 & 0.01 & 0.1 \\
		\cmidrule[0.5pt](lr){1-1} \cmidrule[0.6pt]{2-7}
		0.5  & 1.83 & 1.69 & 1.36 & 1.85 & 1.86 & 2.00 \\
		0.9  & 1.36 & 1.53 & 1.22 & 2.51 & 2.63 & 2.17 \\
		0.99 & 1.09 & 1.14 & 1.26 & 2.50 & 2.60 & 3.69
	\end{tabular}
\end{table}

Table~\ref{tab:results_time} summarizes the latent and Hamiltonian sticky samplers' performances relative to the original sticky sampler's, with values above 1 indicating superior performances.

The latent sampler outperforms the original in all cases.
This empirical finding reinforces our conjecture from Section~\ref{sec:connecting_samplers} that the latent scheme improves on the original sticky paradigm beyond the case theoretically guaranteed by Theorem~\ref{thm:latent_as_rao_blackwellized_ver}.
Note that their relative performance remains essentially identical when standardized by per iteration/integration time, in place of per-computation time;
this is because the original and latent samplers differ only in their sticking times and generating the additional exponential random variables for the original sampler incurs negligible cost.

The Hamiltonian sampler further improves on the latent sampler.
Specifically, its relative advantage over both latent and original samplers generally increases as the predictor correlation, and hence the posterior correlation among parameters, increases with higher $\alpha$ values. 
This finding is consistent with the observation by \citet{nishimura2024zigzag} that the Hamiltonian sampler has an increasing advantage over the \pdmp{} as the parameter correlation increases, though their work does not consider targets with discrete masses.

\subsection{Logistic regression sampler setup and evaluation metrics}\label{subsec:logreg_setup}
The previous example focuses on the case in which the likelihood and slab are both Gaussian and the dynamics underlying the samplers can be simulated exactly.
The latent universe framework is more generally applicable, however, as it can be used with other \hmc{} and \pdmp{} variants, including those based on numerically integrated dynamics.

Using the logistic regression model, we now compare three sticky samplers that take advantage of numerical integration techniques.
The first is the unadjusted zig-zag, given as Algorithm~1 in  \citet{bertazzi2025piecewise}.
We also considered their Metropolis-adjusted Algorithm~3 but found the required momentum flip upon rejection, necessary for the skew detailed balance, to be a significant shortcoming when dealing with latent spike-and-slab posteriors; 
this issue is explained further in Appendix~\ref{supp:adjusted_zz}.
The second is the Hamiltonian zig-zag based on the midpoint integrator (Algorithm S1) of \citet{nishimura2024zigzag}.
Finally, we also run the standard Hamiltonian Monte Carlo with Gaussian momentum and the leapfrog integrator, the algorithm which forms the basis for off-the-shelf software such as Stan \citep{stan} and PyMC \citep{pymc2023}.

For the Hamiltonian samplers, 
we choose a step size which roughly achieves the target acceptance rate of 80\%, as in Stan's default setting, and use the same tuning procedure as before for the integration time.
For the numerically integrated latent zig-zag, we follow Example 6.1 of \citet{bertazzi2025piecewise} and choose a step size $2L^{-1/2}$, where $L$ is the Lipschitz constant of the gradient of the posterior slab's log density.
We find this choice to provide minimal bias in the unadjusted zig-zag, though a larger step size could improve mixing at the expense of increased bias. All step sizes are jittered by 20\% at each iteration.

We otherwise use the exact setup, hardware, and evaluation metrics as the linear regression case.

\subsection{Logistic regression results}
\label{subsec:logreg_results}

\begin{table}[t]
	\caption{Minimum effective sample size per computation time, shown as the ratio relative to the numerically integrated latent zig-zag's performance}
	\label{tab:results_numerical}
	\begin{tabular}{c ccc ccc}
		& \multicolumn{3}{c}{Gaussian momentum} & \multicolumn{3}{c}{Hamiltonian zig-zag} \\
		& \multicolumn{3}{c}{} & \multicolumn{3}{c}{(Laplace momentum)} \\
		& \multicolumn{3}{c}{$\pslab$} & \multicolumn{3}{c}{$\pslab$} \\
		\cmidrule(lr){2-4} \cmidrule(lr){5-7}
		\hspace*{1ex} $\alpha$ \hspace*{1ex} & 0.001 & 0.01 & 0.1 & 0.001 & 0.01 & 0.1 \\
		\cmidrule[0.5pt](lr){1-1} \cmidrule[0.6pt]{2-7}
		0.5  & 0.90 & 0.42 & 0.22 & 1.07 & 0.83 & 0.48 \\
		0.9  & 3.24 & 0.98 & 0.63 & 3.87 & 3.24 & 1.13 \\
		0.99 & 5.94 & 3.41 & 1.69 & 7.37 & 6.63 & 2.85
	\end{tabular}
\end{table}


Results are presented in Table~\ref{tab:results_numerical}.
Mirroring the results from the linear regression case (Table~\ref{tab:results_time}), the Hamiltonian samplers have increasing advantage over the latent sampler as $\alpha$, and hence the posterior correlation, increases.
On the other hand, unlike in the linear regression case where the relative performance is stable over varying $\pslab$, here the Hamiltonian samplers appear to have greater relative advantage over the latent sampler under smaller values of $\pslab$.
This is likely attributable, at least partially, to the difference in their selections of step sizes for numerical integration.
The latent sampler's step size is chosen based on the target's Lipschitz constant, while the Hamiltonian samplers' are tuned to achieve the 80\% acceptance rate,
with the latter approach being potentially more adaptable to the different sparsity levels induced by different $\pslab$ values.

We also observe that the Laplace momentum-based Hamiltonian sampler outperforms the Gaussian momentum-based one in every case.
It is unclear to what extent this pattern holds under other settings, however, particularly because few systematic comparisons are currently available in the literature regarding relative performances of \hmc{} variants based on alternative momenta. 
Our result here suggests a need for more general and systematic performance comparisons under the two momenta, especially given that the Laplace momentum is provably more robust in certain cases \citep{livingstone2019kinetic}.

Finally, we note that the likelihood, and hence the posterior, has discontinuous derivatives at the boundaries of the latent universe.
While existing integrators are generally not designed with discontinuous derivatives in mind and some performance concerns with the leapfrog-integrated sampler have been raised in specific contexts 
\citep{dinh2024hamiltonian}, we find all samplers to be effective for our example.
In particular, the error in Hamiltonian incurred by numerical approximation, i.e.\ the difference in log joint density between the initial and end points of the approximated trajectory, never exceeds 10 across any run within our numerical study.

\section{Discussion}\label{sec:disc}
In this article, we have introduced a new perspective, via the latent universe construction, on the recently developed sticky samplers based on  \pdmps{}.
This perspective has not only provided novel theoretical connections with the Hamiltonian Monte Carlo paradigm, but also motivated more efficient and generalizable methods to carry out posterior inference under spike-and-slab priors via off-the-shelf numerical approximation techniques.

We can also adapt the latent universe method to allow the use of nonlocal-style priors \citep{johnson2012bayesian} as a slab.
More precisely, given a nonlocal slab with $\slab(\pos_i) \geq c > 0$ for  $| \pos_i | > \epsilon$ and $\slab(\pos_i) = 0$ for ``negligible'' values $\pos_i \in [-\epsilon, \epsilon]$ \citep{george1993variable}, we can construct a latent continuous density by inserting the spike mass as a continuous density in between the two positive density parts.
This construction yields a latent sampler whose trajectory mapped to the original space appears to jump from $\pos_i = \pm \epsilon$ to $0$, stick there for the specified amount of time, and exits to the other side with $\pos_i = \mp \epsilon$. 
This possibility shows how our framework further expands the practical scope of spike-and-slab priors for applied Bayesian modeling.

\begin{appendix}

\section{Proof of Theorem \ref{thm:limit1}}\label{supp:pf1}

\begin{proof}[(One-dimensional case)]
	The proof for the one dimensional case contains all the essential ideas, so we begin with this case before moving to higher dimensions.
	As described in the theorem statement, we will construct a coupling of the latent sticky samplers to the original sticky sampler so that they converge in probability with respect to the Skorokhod metric on $[0,\Tt]$ \citep{billing}. 
	For brevity, we drop the $\La$ and $\Or$ superscripts from $(\pos^{\La,r}_t, \velo^{\La,r}_t)$  and  $(\pos^{\Or}_t, \velo^{\Or}_t)$, so the partially refreshed latent samplers are simply denoted as $(\pos^{r}_t, \velo^{r}_t)$  and  the original sampler as $(\pos_t, \velo_t)$.
	
	The Skorokhod metric defines a distance between two real-valued processes $x(t),x'(t): [0,\Tt] \to \mathbb{R}$ while allowing for some dilation in $t$.
	More precisely, it considers a space $\mathcal{K}$ of strictly increasing time dilation functions $\kappa$ with $\kappa(0)=0$ and $\kappa(\Tt) = \Tt$, and quantifies how much $\kappa$ deforms the domain $[0, \Tt]$ by defining a norm on $\mathcal{K}$ as 
	\[
	\lVert \kappa \rVert^\circ = \sup_{t_1\neq t_2} \left | \log \left\{ \frac{\kappa(t_1)-\kappa(t_2)}{t_1-t_2} \right\}\right |.
	\]
	Equipped with these notions, the Skorokhod metric is defined as 
	\begin{equation}
		\label{eq:skorokhod_metric}
		\rho_\Tt(x, x') = \inf_{\kappa \in \mathcal{K}}\{ \max(\lVert \kappa \rVert^\circ, \lVert x - x' \circ \kappa  \rVert_\infty) \},
	\end{equation}
	where $\circ$ denotes a function composition and $\lVert x \rVert_\infty = \sup_t\lVert x(t) \rVert$.
	Under the above distance, the two samplers are considered close as long as they take similar values up to some time dilation, even if their raw values as measured by $\lVert x - x' \rVert_\infty$ are not.

	Away from 0, the two sticky samplers share the same transition kernel. 
	We can therefore couple the samplers to have identical paths except for the duration of their sticking times; 
	details on this part of the coupling are provided later in the proof for the multi-dimensional case. 
	As for the sticking times, Lemma~\ref{lemma:transitionprob} below shows that the latent sampler's times converge in distribution to the original's as $r \to \infty$. 
	We take advantage of this fact and couple the latent sampler's $n$th sticking time $\varsigma_n^r$ to the original's $\varsigma_n$ through the inverse transform method; 
	i.e.\ we generate them by applying the inverses of their cumulative distribution functions to the shared uniform random variable $s_{n} \sim \operatorname{Unif}(0, 1)$.
	With this construction, the distributional convergence implies that $\varsigma_n^r$ converges to $\varsigma_n$ almost surely. 
	The blue and green trajectories of Figure~\ref{fig:vx_kappatilde} illustrate our coupling of the two processes.

	Having constructed the two processes whose sticking durations converge almost surely and who otherwise follow identical paths, we intuitively expect their convergence in the limit.
	We formalize this by showing that the Skorokhod distance between the two processes converges to 0 for almost every realization.
	The idea is to upper-bound the distance by constructing a time deformation function $\kappa_r \in \mathcal{K}$ that aligns the realized sticking times of the latent sampler to the originals, thereby making the distance between $(\pos^r \circ \kappa_r, \velo^r \circ \kappa_r)$ and $(\pos, \velo)$ negligible.
	The almost sure convergence of the sticking times then ensures the norm $\| \kappa_r \|^\circ$ to be small.
	The construction of $\kappa_r$, mathematical details on which we provide below, along with that of $\kappa_r'$ introduced as an intermediate step, is visually illustrated in Figure~\ref{fig:kappa}.
	
	
	To describe our construction of $\kappa_r$, we first introduce some notation. 
	Let $\delta^r_n =  \varsigma_n - \varsigma_n^r$ be the difference between the original and latent samplers' $n$th sticking times.
	Define $\tau_j$ as the inter-sticking durations of the original sampler after the $j-1$ sticking event, with $\tau_1= 0$ if the process starts in the stuck state.
	Further define  $S_n = \sum_{j=0}^{n-1} \varsigma_j + \sum_{j=1}^{n} \tau_j$ as the time at which the $n$th sticking event starts, letting $\varsigma_0 = 0$.

	We now construct $\kappa_r$ to contract or dilate time so that the amount of time spent stuck is the same between $\pos^{r} \circ \kappa_r$ and $\pos$.
	We first set aside the constraint $\kappa_r(\Tt) =  \Tt$ and start by defining  a piecewise-linear continuous $\kappa'_r$ with derivative
	\[
	\frac{\diff}{\diff t}
	\kappa'_r(t) = 
	\begin{cases}
		1 & \pos(t) \neq 0 \\
		\frac{\varsigma_n-\delta^r_n}{\varsigma_n} & x(t) = 0 , \ t \in \left[S_n, \varsigma_n + S_n \right).
	\end{cases}
	\]
	This definition aligns the sticking times and thus ensures $\pos^{r} \circ \kappa'_r = \pos$ and $\velo^{r} \circ \kappa'_r =\velo$.
	An example of $\kappa'_r$ and its effect on the trajectories is shown in Figures~\ref{fig:vx_kappatilde} and \ref{fig:kappa}.
	
	\begin{figure}[htb]
		\centering
		\vspace{2mm}
		\includegraphics[width=14cm]{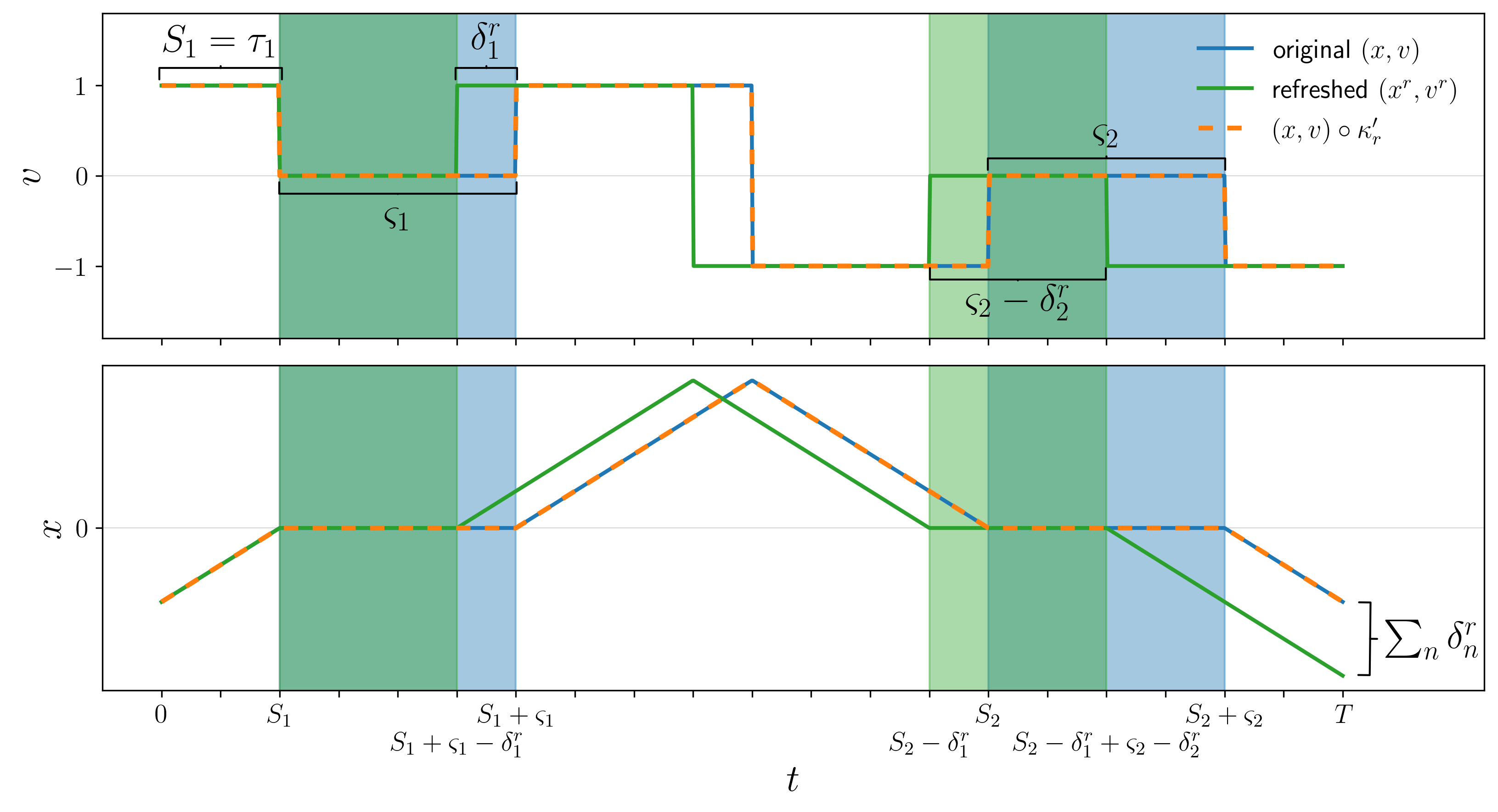}
		\caption{%
			Example process starting unstuck with $x < 0$ and $\velo=1$, getting stuck at the spike, unsticking from there, bouncing once against the gradient, sticking again, and unsticking one last time. In this example, the refreshed process unsticks before the original zig-zag for both sticks. The effect of $\kappa'_r$ applied to $(\pos^r, \velo^r)$ is shown in orange, where $\pos^r(\Tt) \neq \pos^r \circ \kappa'_r(\Tt)$ since $\kappa'_r(\Tt) \neq \Tt$. 
			The blue and green shaded regions represent time spent stuck for the original and refreshed processes, respectively; 
			see Figure~\ref{fig:kappa} for a visual illustration of how $\kappa'_r$ aligns the stuck times of the two samplers.
		}
		\label{fig:vx_kappatilde}
	\end{figure}
	
	\begin{figure}[htb]
		\centering
		\vspace{2mm}
		\includegraphics[width=10cm]{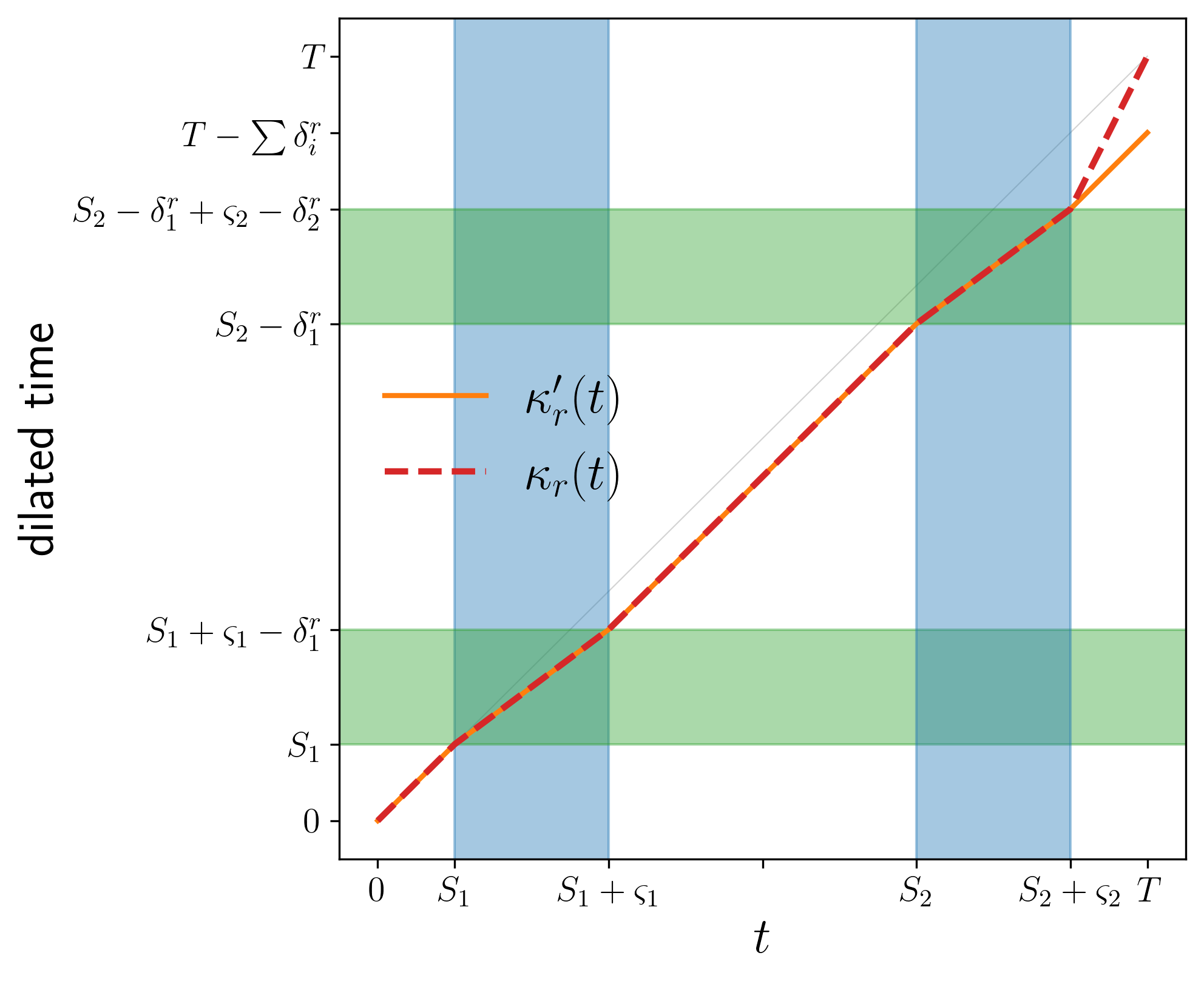}
		\caption{%
			Time dilation functions $\kappa'_r$ and $\kappa_r$ for an interval $[0,\Tt]$. The dilation to align sticking times occurs in the regions where the blue and green bands overlap. Outside of these regions there is no dilation until the final segment between $S_2+\varsigma_2$ and $\Tt$, at which point $\kappa_r$ ``catches up" so as to satisfy $\kappa_r(\Tt) = \Tt$ as required of a dilation function by the Skorokhod metric.
		}
		\label{fig:kappa}
	\end{figure}

	\begin{figure}[htb]
		\centering
		\vspace{2mm}
		\includegraphics[width=14cm]{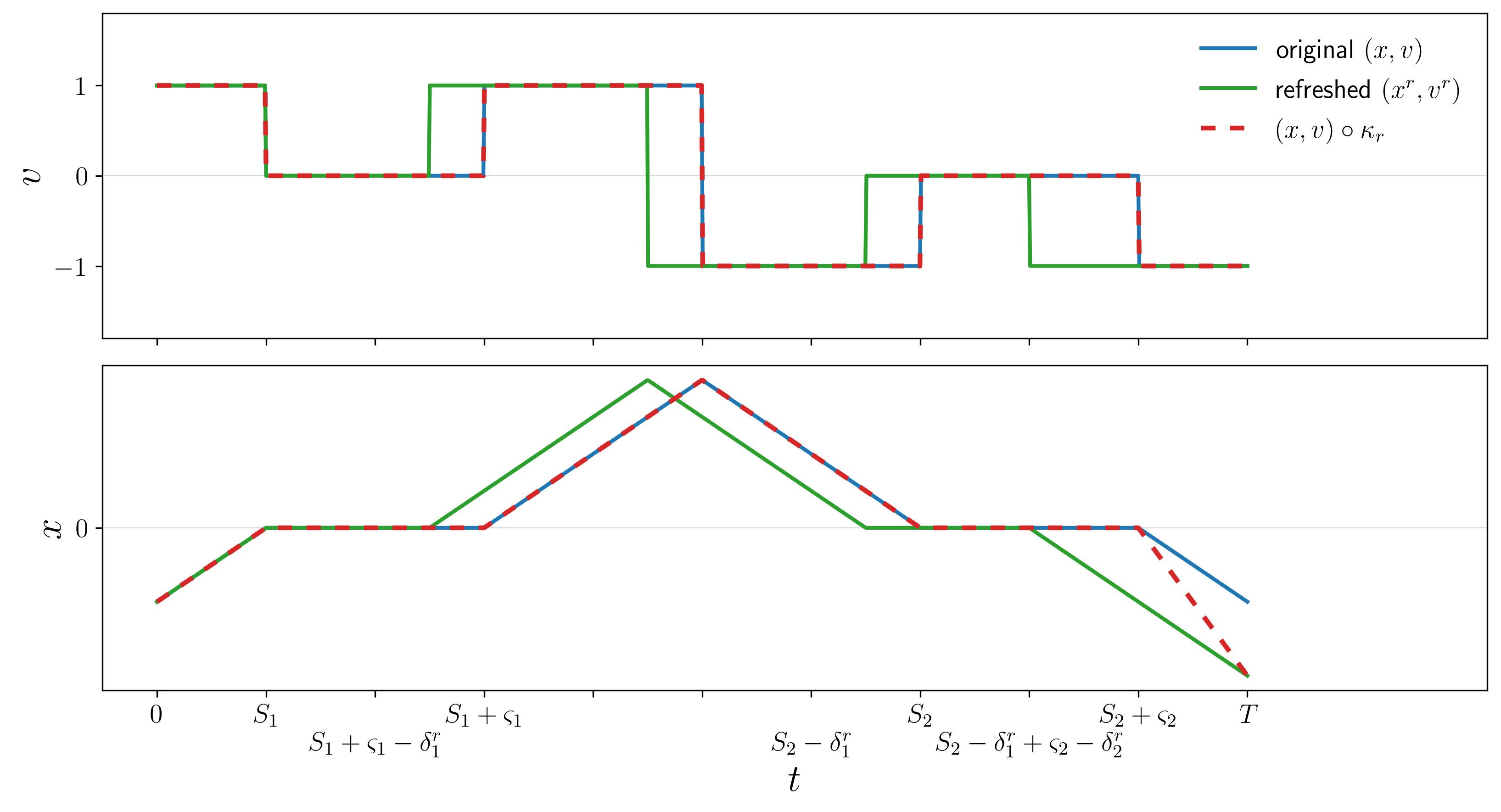}
		\caption{The process from Figure~\ref{fig:vx_kappatilde} with $\kappa_r$ applied to $(\pos^r, \velo^r)$. The velocities are still identical, but the positions are only identical up until the last unsticking event, at which point they diverge up to a maximum discrepancy of $\sum_n\delta_n$.}
		\label{fig:vx_kappa}
	\end{figure}
	
	We now modify the final segment of $\kappa'_r$ to satisfy the constraint $\kappa(\Tt)=\Tt$ as required by the Skorokhod metric (Equation~\ref{eq:skorokhod_metric}).
	Precisely, we modify the final segment to have slope $\frac{\tau_N + \sum_n \delta^r_n}{\tau_N}$ or $\frac{\varsigma_N + \sum_n \delta^r_n}{\varsigma_N}$, depending on if the process is unstuck or stuck at $\Tt$, to ``catch up" for the remainder of the process.
	This yields $\kappa_r$ that satisfies $\kappa_r(\Tt)=\Tt$ and still keep the process $(\pos^r \circ \kappa_r, \velo^r \circ \kappa_r)$ close to $(\pos, \velo)$.
	For the velocity, we have $\velo^{r} \circ \kappa_r =\velo$.
	For the position, we have $\pos^{r} \circ \kappa_r = \pos$ if the processes end in the stuck state.
	If they end in the unstuck state, the processes stay together until the last unsticking event, at which point they deviate linearly up to the maximum $\sum_n \delta^r_n$ discrepancy, i.e.
	\[
	\lVert \pos^{r} \circ \kappa - \pos \rVert_\infty \leq \sum_n \delta_n.
	\]
	The modification of $\kappa'_r$ to $\kappa_r$ and the resulting $(\pos^r \circ \kappa_r, \velo^r \circ \kappa_r)$ are visually illustrated in Figures~\ref{fig:kappa} and \ref{fig:vx_kappa}.
	
	We have thus constructed a time dilation $\kappa_r \in \mathcal{K}$ such that
	$\| (\pos^r \circ \kappa_r, \velo^r \circ \kappa_r) - (\pos^r, \velo^r) \|_\infty \leq \sum_n \delta_n^r$ and $\lVert \kappa_r \rVert^\circ \leq \max \left\{ \frac{\tau_N + \sum_n \delta^r_n}{\tau_N}, \, \frac{\varsigma_N + \sum_n \delta^r_n}{\varsigma_N} \right\}$.
	The Skorokhod distance $\rho_\Tt \left(  (\pos^r \circ \kappa_r, \velo^r \circ \kappa_r), (\pos^r, \velo^r) \right)$ is bounded by the minimum of the two right hand sides, both of which converge to 0 almost surely by Lemma~\ref{lemma:transitionprob}.
\end{proof}

\begin{lemma}\label{lemma:transitionprob}
	For the latent sticky sampler with position refreshment $(\pos^r_t, \velo^r_t)$ as defined in Theorem \ref{thm:limit1}, its sticking times converge in distribution to an exponential random variable with mean $w$ as $r \to \infty$.
\end{lemma}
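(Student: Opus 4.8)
The plan is to show convergence in distribution by computing the Laplace transform of a generic sticking duration $T^r$ and letting $r\to\infty$. First I would describe the dynamics of a single sticking event. Because the likelihood is constant in $\latpos_i$ on the latent universe $[-w/2,w/2]$ and the latent prior is flat there by construction, the potential $U$ has vanishing $i$th partial derivative whenever $\latpos_i$ lies in the universe; hence the \zz{} rate \eqref{eq:zz_rate} for that coordinate is zero and its velocity never flips while it is inside. A coordinate therefore enters the universe at one edge with inward velocity and traverses it monotonically, the only intervening events being the position refreshments, which fire at rate $r$ and reset the coordinate to a $\operatorname{Unif}(-w/2,w/2)$ location without changing the velocity. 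After translating and reflecting, $T^r$ is exactly the first time a particle started at $0$ with unit velocity reaches $w$, where it is teleported to a fresh $\operatorname{Unif}(0,w)$ position at the epochs of a rate-$r$ Poisson process.

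Next I would set up a one-step equation for $\phi_r(x):=\E[e^{-sT}\mid X_0=x]$, $x\in[0,w]$, where $s>0$ is a fixed decay parameter and $T$ is the first hitting time of $w$. Conditioning on whether the particle reaches $w$ (which takes a further $w-x$ units of time) before the next refreshment, or is refreshed first (after an $\operatorname{Exp}(r)$ waiting time, landing at a uniform point, after which the remaining transform is the constant $C_r:=w^{-1}\int_0^w\phi_r(y)\,\diff y$), and writing $a:=r+s$, gives
\[
\phi_r(x)=e^{-a(w-x)}+C_r\,\tfrac{r}{a}\bigl(1-e^{-a(w-x)}\bigr).
\]
Since the right-hand side depends on $x$ only through $e^{-a(w-x)}$, integrating over $x\in[0,w]$ turns this into a single scalar equation for $C_r$, which I would solve to get $C_r=(1-e^{-aw})\big/\bigl(ws+\tfrac{r}{a}(1-e^{-aw})\bigr)$. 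A sticking event begins when the coordinate reaches an edge of the universe, so its duration has transform $\phi_r(0)=e^{-aw}+C_r\tfrac{r}{a}(1-e^{-aw})$.

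Finally I would take $r\to\infty$ with $s$ fixed. Then $e^{-aw}\to0$, $r/a\to1$ and $\tfrac{r}{a}(1-e^{-aw})\to1$, so $C_r\to(1+sw)^{-1}$ and $\phi_r(0)\to(1+sw)^{-1}$. This is exactly $\int_0^\infty e^{-st}\,w^{-1}e^{-t/w}\,\diff t$, the Laplace transform of an exponential random variable of mean $w$, so convergence of the Laplace transforms at every $s>0$ yields the claimed convergence in distribution by the continuity theorem. The arithmetic is routine; the points needing care, rather than genuine obstacles, are (i) the justification that $\partial_i U$ vanishes inside the universe, so that $T^r$ is a clean first-passage time with no velocity flips, and (ii) that $\phi_r$ is well defined, i.e.\ $T^r<\infty$ almost surely, which holds because the escape probability per refreshment, $(1-e^{-rw})/(rw)$, is strictly positive. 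As a sanity check the limiting mean equals $w$, matching the original sticky sampler's mean sticking time \eqref{eq:unstick_rate}; heuristically, the number of refreshments before escape is geometric with success probability $\sim 1/(rw)$ and the inter-refreshment gaps are $\operatorname{Exp}(r)$, and a geometric number of independent exponential gaps converges to an exponential.
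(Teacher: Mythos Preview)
Your argument is correct and takes a genuinely different route from the paper's. The paper works with the survival function $P(\varsigma^r>s)$ directly: it conditions on the number $m$ of refreshments in $[0,s]$, uses that the inter-refresh spacings are then $\mathrm{Dirichlet}(1,\ldots,1)$, and evaluates the resulting expectation $\mathbb{E}_{q^m}\!\left\{\prod_{k=0}^m\mathbbm{1}(sq_k<w)\prod_{k=1}^m(1-sq_k/w)\right\}$ via Dirichlet moments and a series expansion, plus a separate step to show the indicators can be dropped in the limit. Your approach instead exploits the Markov structure through a one-step renewal equation for the Laplace transform, using that after any refreshment the position is uniform so the conditional transform collapses to the single constant $C_r$; this yields a closed form for $\phi_r(0)$ at every finite $r$ and a two-line limit. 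The paper's route is more hands-on with the survival probability itself, which might give finer control if one wanted convergence rates, but your computation is shorter, avoids the combinatorial detour through Dirichlet moments, and sidesteps the indicator-removal argument entirely. Both rely on the same structural observation you flagged in point~(i), that $\partial_i U=0$ inside the latent universe so no velocity flips intervene; the paper uses this implicitly without comment.
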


\begin{proof}
	We focus on a single sticking event since all the sticking times are independent and identically distributed.
	Assume for simplicity that the latent sticky sampler has entered the latent universe at 0 from the left with velocity $+1$.
	Let $\varsigma^r$ denote the time that the sampler leaves the universe.
	Our goal is to show for all $s$ that the probability of the event $\{ \varsigma^r > s \}$, which indicates the sampler remaining in the universe for duration $[0, s]$, satisfies
	\[
	\lim_{r \to \infty} P( \varsigma^r > s) = \exp\left(-\frac{s}{w}\right).\]
	Define $s^r_1, s^r_2, \dots$ as cumulative sums of independent $\operatorname{Exp}(r)$ random variables, representing the sequence of potential refresh times, and define $M_\mathrm{ref}^r = \max_k \{ s_k^r < s \}$, representing the number of potential refresh events before time $s$ which would occur provided the sampler does not leave the latent universe before time $s$.
	Instead of analyzing $P( \varsigma^r > s)$ directly, we analyze the conditional probability
	\begin{equation}\label{eq:unstick_cond_prob}
		P_{\mathrm{cond}, s}(m) = P( \varsigma^r > s \mid M_\mathrm{ref}^r = m).
	\end{equation}
	We denote the conditional probability as $P_{\mathrm{cond}, s}(m)$ without indication of $r$ since, once we condition on $M_\mathrm{ref}^r$, the distribution of $ \varsigma^r$ does not depend on $r$; our derivations below confirm this.	
	Since 
	$P( \varsigma^r > s ) 
	= \mathbb{E} \left[
	P_\mathrm{cond, s}(M_\mathrm{ref}^r)
	\right]$
	and $M_\mathrm{ref}^r \to \infty$ in probability as $r \to \infty$, we have
	$\lim_{r \to \infty} P( \varsigma^r > s ) = \lim_{m \to \infty} P_{\mathrm{cond}, s}(m)$.
	The proof is complete, therefore, once we establish $P_{\mathrm{cond}, s}(m) \to \exp(-s/w)$ as $m \to \infty$.

	By a property of the Poisson process, the $m$ refreshment times are conditionally distributed as uniforms on $[0,s]$; in turn, the $m+1$ inter event times $q_0s, \dots q_ms$ are distributed such that  $q^m = (q_0, \dots,  q_m) \sim \mathrm{Dirichlet}(1,\dots, 1)$.
	The process remains in the universe until the first refresh if $sq_0 < w$.
	For subsequent refreshes with $k \geq 1$, the process remains if the refreshed position is more than $s q_k$ away the universe's upper boundary, which occurs with probability $(1- sq_k/w)\indicator(sq_k < w)$.
	This fact, combined with the independence across the refreshed positions, allows us to express the conditional probability as
	\begin{align*}
		P_{\mathrm{cond}, s}(m)
		&=\mathbb{E}_{q^m} \left\{
		\mathbb{P} \left(\{sq_0 < w\} \text{ and }
		\bigcap_{k = 1}^m \left\{ \thinnerspace \text{$k$th refreshed position $>s q_k$ away} \right\} \, \middle| \, q^m 
		\right)
		\right\}\\
		&=\mathbb{E}_{q^m} \left\{
		\indicator(sq_0 < w) \prod_{k = 1}^m \indicator(sq_k < w) \left( 1- \frac{sq_k}{w} \right)
		\right\}\\
		&=\mathbb{E}_{q^m} \left\{
		\prod_{k = 0}^m \indicator (sq_k < w) 
		\prod_{k = 1}^m \left( 1- \frac{sq_k}{w} \right) 
		\right\}.
	\end{align*}
	
	We now observe that
	\begin{equation}\label{eq:limit_equivalence}
		\begin{split}
		\lim_{m \to \infty} P_{\mathrm{cond}, s}(m)
		&= \lim_{m \to \infty} \mathbb{E}_{q^m} \left\{
		\prod_{k = 0}^m \indicator (sq_k < w) 
		\prod_{k = 1}^m \left( 1 - \frac{sq_k}{w} \right)  
		\right\} \\
		&= \lim_{m \to \infty} \mathbb{E}_{q^m} \left\{
		\prod_{k = 1}^m \left( 1- \frac{sq_k}{w} \right)
		\right\},
		\end{split}
	\end{equation}
	where the latter equality can be seen to hold as follows.
	The equality is immediate, and in fact holds without taking the limit, when $s \leq w$ and hence $\indicator( s q_k  < w ) = 1$. 
	Focusing on the case $s > w$, therefore, we note that the difference between the two expectations can be expressed as
	\begin{align*}
		0 
		&\leq \mathbb{E}_{q^m} \left\{
		\prod_{k = 1}^m \left( 1- \frac{sq_k}{w} \right)
		\right\} - 
		\mathbb{E}_{q^m} \left\{
		\prod_{k = 0}^m \indicator(sq_k < w) 
		\prod_{k = 1}^m \left( 1 - \frac{sq_k}{w} \right)  
		\right\} \\
		&= 
		\mathbb{E}_{q^m} \left\{
		\left( 1 - \prod_{k = 0}^m \indicator(sq_k < w) \right)
		\prod_{k = 1}^m \left( 1 - \frac{sq_k}{w} \right)  
		\right\} \\
		&= 
		\mathbb{E}_{q^m} \left\{
		\indicator \left( \bigcup_{k = 0}^m \{ sq_k \geq w \} \right)
		\prod_{k = 1}^m \left( 1 - \frac{sq_k}{w} \right)  
		\right\}.
	\end{align*}
	We show that the last expression, and hence the difference, tends to 0 as $m \to \infty$ by establishing an appropriate upper bound.
	To this end, we note that
	\begin{align*}
		\left| \prod_{k = 1}^m \left( 1 - \frac{sq_k}{w} \right)	 \right|  
		&\leq  \prod_{k = 1}^m  \left( 1 + \frac{sq_k}{w} \right) \\
		&\leq  \prod_{k = 1}^m  \exp\left(\frac{sq_k}{w}\right) \\
		&= \exp\left(\frac{s}{w}\sum_{k = 1}^mq_k\right) \\
		&\leq \exp\left(\frac{s}{w}\right), 
	\end{align*}
	and hence that
	\begin{align*}
		\mathbb{E}_{q^m} \left\{
		\indicator \left( \bigcup_{k = 0}^m \{ sq_k \geq w \} \right)
		\prod_{k = 1}^m \left( 1 - \frac{sq_k}{w} \right)  
		\right\}
		\leq 
		P \left( \bigcup_{k = 0}^m \{ sq_k \geq w \} \right)
		\exp\left(\frac{s}{w}\right).
	\end{align*}
	Finally, we use a union bound for the probability and use the fact that each Dirichlet variable $q_k$ is marginally distributed as $\mathrm{Beta}(1, m)$:
	\begin{align*}
		P \left( \bigcup_{k = 0}^m \{ sq_k \geq w \} \right)
		&\leq \sum_{k=0}^m P\left(q_k \geq \frac{w}{s} \right)\\
		&= (m+1)\int_{w / s}^\infty m(1-q)^{m-1} \diff q \\ 
		&= (m+1) \left(1 - \frac{w}{s} \right)^m,
	\end{align*}	
	where the last term tends to 0 as $m \to \infty$ in the case $s > w$ under consideration.
	
	With Equation~\eqref{eq:limit_equivalence} established, we now compute its rightmost expectation:
	\begin{equation}
		\begin{aligned}[b]
			&\hspace*{-1.5em}\mathbb{E}_{q^m} \left\{ \prod_{k=1}^{m} \left(1-\frac{s}{w}q_k\right)  \right\} \\
			&=\mathbb{E}_{q^m}\left\{1 + \sum_k \left( - \frac{s}{w} \right) q_k  + \sum_{k, \ell} \left( - \frac{s}{w} \right)^2 q_k q_\ell + \dots +  \left(-\frac{s}{w}\right)^{m}(q_1\dots q_m) \right\} \\
			&=1+\sum_{k=1}^{m} {m \choose k} \left(-\frac{s}{w}\right)^k \mathbb{E}_q(q_1 \dots q_{k})   \\
			&=\sum_{k=0}^{m} {m \choose k} \left(-\frac{s}{w}\right)^k \frac{m!}{(m+k)!}, \quad \\ 
			&= \sum_{k=0}^{m} \frac{m!m!}{(m-k)!(m+k)!}  \frac{(-s)^k}{w^k k!} \\
			&= \sum_{k=0}^{\infty} \indicator(k \leq m) \left( \prod_{\ell = 0}^{k-1} \frac{m - \ell}{m+k-\ell} \right) \frac{(-s)^k}{w^k k!},
		\end{aligned}\label{eq:probleave}
	\end{equation}
	where the second equality follows from equivalence of expectations due to exchangeability among the $q_k$.
	Taking the limit as $m \to \infty$ of the last expression above yields the desired result:
	\begin{align*}
		\lim_{m \to \infty} \sum_{k=0}^{\infty} \indicator(k \leq m) \left( \prod_{\ell = 0}^{k-1} \frac{m - \ell}{m+k-\ell} \right) \frac{(-s)^k}{w^k k!}
		&=  \sum_{k=0}^{\infty} \lim_{m \to \infty}  \indicator(k \leq m) \left( \prod_{\ell = 0}^{k-1} \frac{m - \ell}{m+k-\ell} \right) \frac{(-s)^k}{w^k k!} \\
		&=  \sum_{k=0}^{\infty}  \frac{(-s)^k}{w^k k!}\\
		&= \exp\left(-\frac{s}{w}\right),
	\end{align*}
	where the first equality follows by the dominated convergence theorem, as the $k$th term in the sum is dominated in absolute value by $s^k/(w^k k!)$.
\end{proof}

\begin{proof}[(Multidimensional case)]
	We now establish the convergence result in dimension $d>1$. 
	
	We start with a description of our coupling between the two sticky samplers in $\mathbb{R}^d$.
	The coupling is constructed using a collection of independent $\operatorname{Unif}(0, 1)$ random variables $\{s_{n,i}, u_{n,i}\}$ for $i=1, \dots, d$ and $n \geq 0$, where $i$ indexes the dimensions and $n$ indexes the sequence of events in between which the velocity stays constant.
	Note that an unstick event does not necessarily follow a stick event in the multidimensional case---a bounce event along another coordinate may occur in between---and we do not a priori know the type of the $n$th event.
	
	We present in detail the construction of the latent sticky sampler $(\pos^r_t, \velo^r_t )$ with partial position refreshment;
	the construction of the original sticky sampler $(\pos_t, \velo_t )$ is essentially identical and is obtained by simply replacing $(\pos^r, \velo^r )$ with $(\pos, \velo )$ in each step below.
	Denote by $(\pos_n^r, \velo_n^r)$ the sampler state immediate after the $n$th event and the corresponding update in the velocity. 
	Since the sampler trajectory is deterministic in between events, the coupling construction is complete once we provide a recipe for determining the time and type of the $n$th event from $(\pos_{n-1}^r, \velo_{n-1}^r)$.
	To this end, we start by defining \textit{potential event times} $\tau^r_{n, i, e}$ for each event type $e \in \mathcal{E} = \{ \mathrm{stick}, \mathrm{unstick}, \mathrm{bounce} \}$.
	For each event type $e$ and coordinate $i$, the potential event time $\tau^r_{n, i, e}$ represent the time at which the event would occur if no other events occur.
	The earliest of them constitutes the actual event time $\tau^r_n = \min_{i, e} \tau^r_{n, i, e}$ with the corresponding event type and index given by the pair $g^r_n = \arg\min_{i, e} \tau^r_{n, i, e}$.
	For the original sticky sampler $(\pos_t, \velo_t )$, we analogously denote its potential and actual event times by $\tau_{n, i, e}$ and $\tau_{n}$ and corresponding event type-index pair by $g_n$.
	
	We construct $\tau^r_{n, i, e}$ from the given random variables $\{s_{n,i}, u_{n,i}\}$ as follows.
	We start with the case $e = \mathrm{unstick}$, where the coupling idea is the same as in the one-dimensional case. 
	For each stuck coordinate with $\pos^r_{n,i} = 0$, we define $\tau^r_{n, i, \mathrm{unstick}}$ as the inverse cumulative distribution transform of $s_{n,i}$;
	Lemma~\ref{lemma:transitionprob} then guarantees that $\tau^r_{n, i, \mathrm{unstick}} \to - \log s_{n,i}$ almost surely as $r \to \infty$.
	For each of the coordinates that are already unstuck, we set $\tau^r_{n, i, \mathrm{unstick}} = \infty$ as a placeholder indicating the impossibility of the event type.
	Turning to the case $e = \mathrm{bounce}$, we construct the potential event time as
	\begin{equation}\label{eq:lszz_bounce}
		\tau^r_{n, i, \mathrm{bounce}} = 
		\inf_{t>0}\left[-\log u_{n,i}  \leq \int_0^t \{\velo^r_{n-1,i} \partial_i U(\pos^r_{n-1} + s \velo^r_{n-1})\}^+\diff{s} \right],
	\end{equation}
	where $\velo^r_{n-1, i}$ refers to the $i$th component of the $\velo^r_{n-1}$  and $\partial_i$ is the $i$th partial derivative. 
	Finally, for $e = \mathrm{stick}$, there is no explicit coupling as the time to the next stick is simply the time until reaching 0:
	\begin{equation}
		\label{eq:lszz_stick}
		\tau^r_{n, i, \mathrm{stick}} = 
		\begin{cases}
			-\pos^r_{n-1, i} / \velo^r_{n-1, i} & \text{if } \, \mathord{-}\pos^r_{n-1, i} / \velo^r_{n-1, i} > 0 \\
			\infty & \textrm{otherwise}.
		\end{cases}
	\end{equation}
	
	The reminder of proof follows the same main idea as in the one-dimensional case, based on the construction of a dilation function to aligns the event times between the two samplers.
	Unlike the one-dimensional case, where both samplers follow the same path away from 0, the samplers can now take different paths if their sticking times differ;
	this is because their unstuck coordinates continue to evolve even when their stuck coordinates both remain at 0.
	However, we will show that the two samplers will still converge as their sticking times converge as $r \to \infty$. 
	
	To establish the convergence of the two samplers, it suffices to establish the convergence of their event times and types.
	If no coordinate is stuck initially, then our coupling ensures that the samplers follow an identical path from the shared initial state.
	Without loss of generality, therefore, we assume that the samplers' initial state has at least one coordinate stuck.
	We start by establishing that the following convergence of the potential event times holds for $n = 1$, corresponding to the first event: $\tau^r_{n, i, e} \to \tau_{n, i, e}$ for all $i$ and $e$ as $r \to \infty$
	This convergence in turn implies that of the actual event type, time, and location: $\tau^r_{n} \to \tau_{n}$, $g^r_n \to g_n$, and $(\pos_n^r, \velo_n^r) \to (\pos_n, \velo_n)$.
	We then show the same holds for $n \geq 2$, i.e.\ for the second and all remaining events, of which there are almost surely finitely many. 
	
	For $n=1$, the shared initial condition and our coupling ensures that the potential sticking and bounce event times, constructed as in Equation~\eqref{eq:lszz_bounce} and \eqref{eq:lszz_stick}, coincide between the two samplers;
	i.e.\ $\tau^r_{1, i, e} = \tau_{1, i, e}$ for $e \in \{ \mathrm{stick},  \mathrm{bounce} \}$.
	And our coupling construction and Lemma~\ref{lemma:transitionprob} ensure the convergence of the potential unsticking event times; i.e.\ $\tau^r_{1, i, \mathrm{unstick}} \to - \log s_{1,i} = \tau_{1, i, \mathrm{unstick}}$ as $r \to \infty$.
	Since all the potential event times converge, so do the actual event times and types of the two samplers.
	
	We now consider the potential event times for the second event, the case $n = 2$.
	The convergence of the two samplers' first event guarantees that, for all sufficiently large $r$, the samplers have the same velocity $\velo^r_1 = \velo_1$ after the first event.
	They have different positions $\pos^r_1$ and $\pos_1$, but the discrepancy $\delta_1(r) = \lVert \pos^r_1 - \pos_1 \rVert$ converges to 0 as $r \to \infty$ and hence can be made arbitrarily small.
	For the potential sticking times \eqref{eq:lszz_stick}, their continuous dependence on $\pos^r_1$ and $\pos_1$ implies that $\tau^r_{2, i, \mathrm{stick}} \to \tau_{2, i, \mathrm{stick}}$ as $r \to \infty$ and $\delta_1(r) \to 0$.
	For the potential bounce times, we note their continuous dependence on the integrals on the right hand side of Equation~\eqref{eq:lszz_bounce}.
	And the integrals converge as $r \to \infty$ because the integrands converge, which can be seen from the following bound on their difference:
	\begin{align*}
		&\hspace*{-1.75em}\left| \{\velo_{1,i} \partial_i U(\pos_1  + s \velo_1)\}^+  -  \{\velo^r_{1,i} \partial_i U(\pos^r_1 + s \velo^r_1)\}^+ \right| \\
		&\leq \left| \velo_{1,i} \partial_i U(\pos_1  + s \velo_1)  -  \velo^r_{1,i} \partial_i U(\pos^r_1 + s \velo^r_1) \right| \\
		&= \left|  \partial_i U(\pos_1  + s \velo_1) - \partial_i U(\pos^r_1 + s \velo_1) \right| \\
		&\leq \left\{ \textstyle \sup_{s \in [0, \Tt]} \lVert \nabla \partial_iU(\pos_1  + s \velo_1) \rVert \right\} \lVert \pos^r_1 - \pos_1 \rVert \\
		&\leq \left\{ \textstyle \sup_{x \in B(\pos_0, \Tt)} \lVert \nabla \partial_iU(x) \rVert \right\} \delta_1(r)
	\end{align*}
	where $B(\pos_0, \Tt)$ denotes the $L_\infty$ ball of radius $\Tt$ centered at $\pos_0$.
	This establishes the convergence of the integrals and hence $\tau^r_{2, i, \mathrm{bounce}} \to \tau_{2, i, \mathrm{bounce}}$.
	Finally, for the potential unsticking times, their convergence $\tau^r_{2, i, \mathrm{unstick}} \to \tau_{2, i, \mathrm{unstick}}$ again follows from our coupling construction and Lemma~\ref{lemma:transitionprob}.
	
	We can analogously establish the convergence $\tau^r_{n, i, e} \to \tau_{n, i, e}$ for the subsequent events and hence $(\pos_n^r, \velo_n^r) \to (\pos_n, \velo_n)$ for all $n \geq 1$.
	This in turn implies $\lVert (x, v) - (x^r, v^r) \circ \kappa^r \rVert_\infty \to 0$ and  $\lVert \kappa^r \rVert^\circ \to 0$, establishing the convergence of the two samplers in the Skorokhod metric.
\end{proof}

\section{Proof of Theorem \ref{thm:latent_as_rao_blackwellized_ver}}\label{supp:pf2}
Our proof utilizes the facts that, when the target has independent coordinates, the coordinate processes of both latent and original sticky zigzags are independent and each coordinate process is \textit{regenerative}; 
i.e.\ there is a sequence of \textit{regeneration times} $T_0, T_1,  \dots$ such that the intervals $T_{j+1}-T_{j}$ and integrals $\int_{T_j}^{T_{j+1}} f(x_{i, t})\diff{t}$ are independently and identically distributed \citep{serfozo2009basics}.
Such regeneration times can be constructed as the return times to pre-specified position $\pos_{i, 0} \neq 0$ and velocity $\velo_{i, 0}$, defining $T_{-1} = 0$ and 
$T_{j} = \inf \left\{ 
t > T_{j - 1} : \pos_i(t) = \pos_{i, 0} 
\, \text{ and } \,
\velo_i(t) = \velo_{i, 0} 
\right\}$
for $j \geq 0$.

The regenerative property of the coordinate process gives the following result, per Theorem 65 of \citet{serfozo2009basics}.
To simplify the notation, we take the initial sampler position to coincide with the regeneration point so that $T_0 = 0$.
Now assume the regularity conditions as below:

\begin{equation}
	\label{eq:regularity_for_regeneration_thm}
	\begin{gathered}
		\tau := \E[T_1] < \infty, \\
		\mu := \frac{1}{\tau}\E\left\{\int_0^{T_1} f( x_{i, t} ) \diff t\right\} < \infty, \\
		\sup_{0\leq t\leq T_1} \left|\int_{0}^{t} f( x_{i, t} ) \diff t - \int_{0}^{T_0} f( x_{i, t} ) \diff t \right|<\infty,  \\
		0<\var\left\{\int_0^{T_1} f( x_{i, t} ) \diff t-\mu T_1\right\}<\infty.
	\end{gathered}
\end{equation}
Then the asymptotic variance of $\{\int_0^\tau f( x_{i, t} ) \diff t -\mu \tau\}/ \tau^{1/2}$ as $\tau \to \infty$ is given by
\begin{equation}\label{eq:asymp_var}
	\frac{1}{\tau}\var\left\{\int_0^{T_1} f( x_{i, t} ) \diff t-\mu T_1\right\}.
\end{equation}
The positive recurrence condition $\E[T_1] < \infty$ holds for the zigzags on a spike-and-slab posterior whenever, for example, the likelihood is strongly log-concave and has a bounded gradient \citep{bierkens2019zig_ergodicity}.
For such a posterior, the rest of the regularity conditions are trivially satisfied when, for example, $f$ is polynomially bounded.

\begin{proof}
	Independence of the coordinate processes means it suffices to consider a single dimension.
	Correspondingly, we simplify the notation by dropping the subscript from $\pos_i$ and denoting it simply as $\pos$.
	By discarding as necessary the processes' initial segments, which have vanishing contributions in the limit, we can without loss of generality assume that both samplers starts away from 0.
	We take this initial state as the point at which the regeneration occurs.
	For both latent and original samplers, the asymptotic variance is then given by the expression~\eqref{eq:asymp_var}, which we decompose into the terms corresponding to the stuck and unstuck states:
	\begin{align*}
		\frac{1}{\tau}\var\left\{\int_{0}^{T_1}f(x_t)\diff t - \mu T_1\right\} &=\frac{1}{\tau}\var\left[\int_{\{t: x_t \neq 0\}}f(x_t)\diff t - \mu (T_1 - S_1) + S_1\left\{f(0) - \mu\right\}\right],
	\end{align*}
	where $S_1 = \left| \{t \in [0, T_1]: x_t = 0\} \right|$ denotes the time spent stuck at 0.
	For the rest of the proof, we will denote the stuck time of the latent sampler as $S_1^\La$ and of the original sampler as $S_1^\Or$;
	we will denote both samplers' trajectories away from 0 as $x_t$ omitting the superscript, with a slight abuse of notation, since the two shares the same transition kernel away from 0.

	The stuck time $S_1^\La$ of the latent sampler is deterministic, so its asymptotic variance is given by
	\[
	\nu^\La := 
	\frac{1}{\tau}\var\left\{\int_{\{t: x_t \neq 0\}}f(x_t)\diff t - \mu (T_1^\La - S_1^\La) \right\}.
	\]
	On the other hand, the stuck time $S_1^\Or$ of the original sampler is random, which introduces additional variance.
	To formalize this, we denote the contributions from the unstuck ``slab'' and stuck ``spike'' parts
	as $Z_\mathrm{slb} = \int_{\{t: x_t \neq 0\}}f(x_t)dt - \mu (T_1^\Or - S_1^\Or) $ and $Z_\mathrm{spk} = S_1^\Or\{f(0) - \mu\}$, which are independent of each other because of the sampler's Markovian property.
	We now lower bound the original sampler's asymptotic variance by the latent sampler's through the variance decomposition formula:
	\begin{align*}
		&\hspace*{-1.75em}\frac{1}{\tau}\var\left[\int_{\{t: x_t \neq 0\}}f(x_t)\diff t - \mu (T_1^\Or - S_1^\Or) + S_1^\Or\{f(0) - \mu\}\right]\\
		&=\frac{1}{\tau}\var\left(Z_\mathrm{slb} + Z_\mathrm{spk} \right)\\
		&\geq \frac{1}{\tau}\var \left\{\E(Z_\mathrm{slb} +  Z_\mathrm{spk} \mid  Z_\mathrm{slb}) \right\} \\
		&= \frac{1}{\tau}\var \left\{Z_\mathrm{slb} +  \E(Z_\mathrm{spk} \mid  Z_\mathrm{slb}) \right\} \\
		&= \frac{1}{\tau}\var \left\{Z_\mathrm{slb} +  \E(Z_\mathrm{spk}) \right\} \\
		&= \frac{1}{\tau}\var \left(Z_\mathrm{slb} \right) \\
		&=\nu^\La.
	\end{align*}
\end{proof}

\section{Changing latent universe width}\label{supp:changing_width}

\begin{figure}[h]
	\centering
	\vspace{2mm}
	\includegraphics[width=10cm]{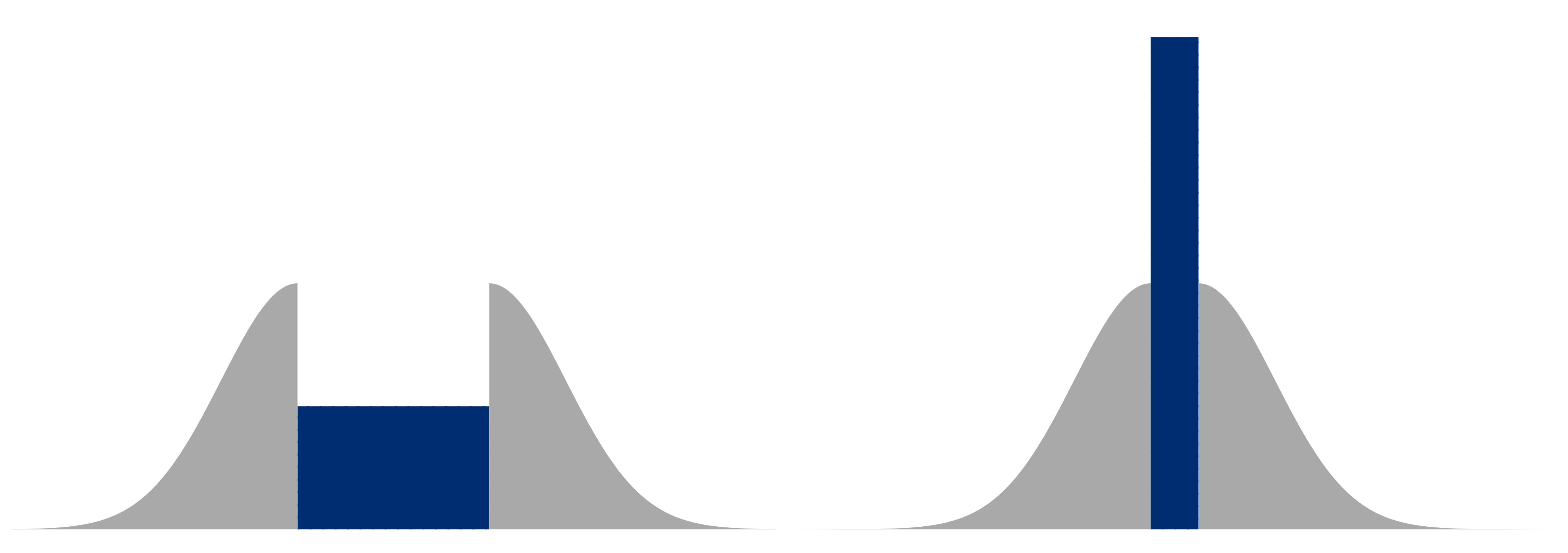}
	\caption{Left: Making the universe wider leads to multimodal posteriors, and samplers can bounce instead of entering the universe. Right: Making the universe narrower means samplers are guaranteed to enter the universe, but may bounce within the universe instead of exiting, leading to random exit directions.}
	\label{fig:alt_widths}
\end{figure}

Our choice of the latent universe width $w$ as in Equation~\eqref{eq:unstick_rate} is not the only one, but is likely an optimal one in the following sense.
Suppose we use a new width $c w$, scaling the original width by a factor $c \neq 1$.
Now, the density height needs to be adjusted accordingly to be $c^{-1} \slab(0)$ to maintain the same amount of mass in the latent universe.
This results in a discontinuous latent density and the latent sticky sampler now has to incorporate reflective behavior when trying to enter and exit the latent universe: 
the entry move is ``accepted'' with probability $\min\{1, c^{-1}\}$ and the exit move with probability $\min\{1, c\}$ \citep{chevallier2024pdmp}.
Otherwise, the sampler's coordinate fails to cross the discontinuity boundary and bounces with a flip in the coordinate's velocity.

For $c > 1$, the increased width and decreased height make the latent density multimodal (Figure \ref{fig:alt_widths}), undermining the sampler's ability to enter the universe and cross over to the other side.
This hampers efficient exploration.
For $c < 1$, the decreased width and increased height means the sampler is guaranteed to enter the universe, but only successfully exits from the other side with probability $c < 1$. 
If it fails, the sampler bounces and travels back to the other side and tries to exit again;
in this manner, the sampler keeps trying to exit from either side of the universe until it succeeds.
This creates randomness in its exit direction and causes some wasteful backtracking of the trajectory to the same side as the sampler entered.

For the case of decreasing width $c \to 0$, we can in fact show that the latent sampler converges to a version of the reversible jump \pdmp{} of \citet{chevallier2023reversible}, whose diffusive behavior has been pointed out as undesirable by \citet{bierkens2023sticky}.
The total number of exit attempts in a unit interval equals $\lfloor 1/(cw) \rfloor$, each of which has success probability of $c$; 
as $c\to0$, therefore, the time to exit converges to an exponential random variable of rate $w^{-1}$. 
In other words, we obtain a process which sticks for the same exponential amount of time as the original sticky sampler, but now with a random exit velocity.

\section{Adjusted Numerical Zig-Zag}\label{supp:adjusted_zz}

For the simulation study on the logistic regression case (Section~\ref{subsec:logreg_setup} and \ref{subsec:logreg_results}), we explored the numerical approximations of the zig-zag process on the latent density both with and without the non-reversible Metropolis adjustment, using Algorithms~1 and 3 of \citet{bertazzi2025piecewise}.
Their integrator starts with a half step in the position parameter, applies random coordinate-wise flips in the velocity according to Poisson event rates, and takes another half step in the position. 
The adjusted algorithm then concludes with a non-reversible Metropolis accept-reject step; 
if rejected, the sampler retains the initial position and reverses the initial velocity across \textit{all} coordinates.

We found this velocity reversal upon rejection to have a major negative impact on the resulting sampler's mixing when dealing with the latent density.
A dynamics-based sampler ordinarily has no problem traversing the latent universe along each coordinate because the density there has the partial derivative equal to zero;
in particular, both the analytic and unadjusted zig-zags maintain persistent motion within the latent universe and never backtrack.
However, the non-reversible Metropolis adjustment causes the adjusted zigzag to reverse its direction along a coordinate within the latent universe if there is a rejection driven by approximation errors along other coordinates.
This leads to random-walk behavior and inefficient exploration which, notably, cannot necessarily be remedied by tuning the step size to achieve a high acceptance rate;
as the accept-reject step is carried out at each numerical integration step, the sampler can still undergo a rejection and velocity reversal before it successfully exits the latent universe from the other side.
This is illustrated in Figure~\ref{fig:adj_vs_unadj}, which shows the trace plot along a selected coordinate in both the original and latent spaces.
While the adjusted sampler uses a step size that yields an acceptance rate above 99\%, it still undergoes thousands of rejections over the course of the sampling run, resulting in an order of magnitude loss in efficiency.

\begin{figure}[h]
	\centering
	\vspace{2mm}
	\includegraphics[width=14cm]{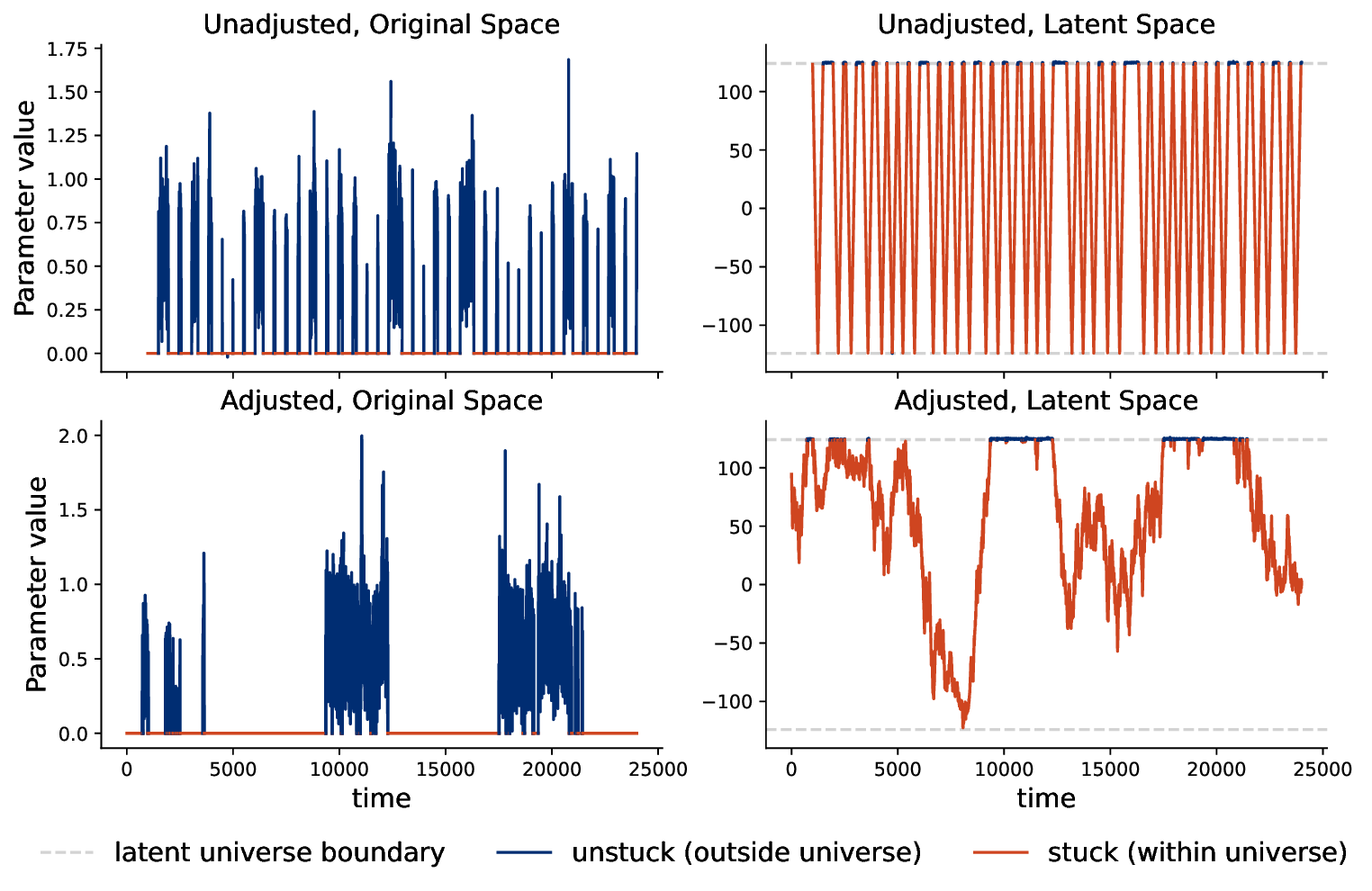}
	\caption{%
		Trace plots along a selected coordinate, when sampling from the logistic regression posterior of Section~\ref{subsec:logreg_setup} with $p_\mathrm{slab} = 0.01$, for both the adjusted (upper) and unadjusted (lower) numerically integrated zig-zags.
		The trace plots in the original space is shown on the left and in the latent space on the right, with the the width of the latent universe $w=248.16$ per Equation~\eqref{eq:unstick_rate}.
		Segments of the trace plots have been colored based on whether the sampler is stuck (\textcolor{jhured}{red}) or unstuck (\textcolor{jhublue}{blue}).
		The adjusted sampler mixes poorly in the original space, failing to move efficiently between the spike and slab due to velocity flips within the latent universe caused by the non-reversible Metropolis adjustment.
	}%
	\label{fig:adj_vs_unadj}
\end{figure}
\FloatBarrier

\end{appendix}
\bibliographystyle{imsart-nameyear} 
\bibliography{ref}       

\end{document}